\newtheorem{Theorem}{Theorem}[section]
\newtheorem{Definition}{Definition}[section]
\newtheorem{Corollary}{Corollary}[section]
\newtheorem{Lemma}{Lemma}[section]
\newtheorem{Proposition}{Proposition}[section]
\title{$\alpha$-Geodesical Skew Divergence}
\author{
 Masanari Kimura \\
  Graduate University for Advanced Studies, SOKENDAI \\
  \texttt{mkimura@ism.ac.jp} \\
   \And
 Hideitsu Hino \\
  Institute of Statistical Mathematics \\
  \texttt{hino@ism.ac.jp} \\
}
\begin{document}
\maketitle
\begin{abstract}
The asymmetric skew divergence smooths one of the distributions by mixing it, to a degree determined by the parameter $\lambda$, with the other distribution. Such divergence is an approximation of the KL-divergence that does not require the target distribution to be absolutely continuous with respect to the source distribution. In this paper, an information geometric generalization of the skew divergence called the  $\alpha$-geodesical skew divergence is proposed, and its properties are studied.
\end{abstract}

\section{Introduction}
Let $(\mathcal{X}, \mathcal{F}, \mu)$ be a measure space where $\mathcal{X}$ denotes the sample space, $\mathcal{F}$ the $\sigma$-algebra of measurable events, and $\mu$ a positive measure.
\if0
We introduce the following notations for convenience.
\begin{Definition}
Let $\mathcal{P}$ be the subspace of positive integrable functions $f$ such that

\begin{equation}
    \mathcal{P} \coloneqq \Big\{f(x) > 0\ (\forall x\in\mathcal{X}),\ \text{and}\ \int_\mathcal{X} f(x)d\mu(x) = 1\Big\}.
\end{equation}

\end{Definition}
\begin{Definition}
Let $\mathcal{P}_{+}$ the supspace of non-negative integrable functions $f$ such that
\begin{equation}
    \mathcal{P}_{+} \coloneqq \Big\{f(x) \geq 0\ (\forall x\in\mathcal{X}),\ \text{and}\ \int_\mathcal{X} f(x)d\mu(x) = 1\Big\}.
\end{equation}
\end{Definition}
\fi
The set of the strictly positive probability measure $\mathcal{P}$ is defined as

\begin{equation}
    \mathcal{P} \coloneqq \Big\{f(x) > 0\ (\forall x\in\mathcal{X}),\ \text{and}\ \int_\mathcal{X} f(x)d\mu(x) = 1\Big\},
\end{equation}

and the set of nonnegative probability measure $\mathcal{P}_{+}$ is defined as

\begin{equation}
    \mathcal{P}_{+} \coloneqq \Big\{f(x) \geq 0\ (\forall x\in\mathcal{X}),\ \text{and}\ \int_\mathcal{X} f(x)d\mu(x) = 1\Big\}.
\end{equation}

Then a number of divergences that appear in statistics and information theory~\cite{deza2009encyclopedia,basseville2013divergence} are introduced.

\begin{Definition}{(Kullback–Leibler divergence~\cite{kullback1951information})}
The Kullback–Leibler divergence or KL-divergence $D_{KL}:\mathcal{P}_+\times\mathcal{P}\to[0,\infty]$ is defined between two Radon–Nikodym densities $p$ and $q$ of $\mu$-absolutely continuous probability measures by

\begin{equation}
    D_{KL}[p\|q] \coloneqq \int_\mathcal{X} p\ln\frac{p}{q}d\mu.
\end{equation}

\end{Definition}
KL-divergence is a measure of the difference between two probability distributions in statistics and information theory~\cite{sakamoto1986akaike,goldberger2003efficient,yu2013kl,solanki2006provably}.
This is also called the relative entropy, and is known not to satisfy the axiom of distance.
Since the KL-divergence is asymmetric, several symmetrizations have been proposed in the literature~\cite{lin1991divergence,menendez1997jensen,nielsen2019jensen}.
\begin{Definition}{(Jensen–Shannon divergence~\cite{lin1991divergence})}
The Jensen–Shannon divergence or JS-divergence $D_{JS}:\mathcal{P}\times\mathcal{P}\to[0,\infty)$ is defined between two Radon–Nikodym densities $p$ and $q$ of $\mu$-absolutely continuous probability measures by

\begin{align}
\notag
    D_{JS}[p\|q] &\coloneqq \frac{1}{2}\Biggl(D_{KL}\Big[p\Big\|\frac{p+q}{2}\Big] + D_{KL}\Big[q\Big\|\frac{p+q}{2}\Big]\Biggr) \\
    \notag
    &= \frac{1}{2}\int_\mathcal{X}\Biggl(p\ln\frac{2p}{p+q} + q\ln\frac{2q}{p+q}\Biggr) d\mu \\
    &= D_{JS}[q\| p].
\end{align}

\end{Definition}
The JS-divergence is a symmetrized and smoothed version of the KL-divergence, and it is bounded as

\begin{equation}
    0 \leq D_{JS}[p\|q] \leq \ln 2.
\end{equation}

This property contrasts with the fact that KL-divergence is unbounded.

\begin{Definition}{(Jeffreys divergence~\cite{jeffreys1946invariant})}
The Jeffreys divergence $D_J[p\|q]:\mathcal{P}\times\mathcal{P}\to[0,\infty]$ is defined between two Radon–Nikodym densities $p$ and $q$ of $\mu$-absolutely continuous probability measures by

\begin{equation}
    D_J[p\|q] \coloneqq D_{KL}[p\|q] + D_{KL}[q\|p].
\end{equation}

\end{Definition}

Such symmetrized KL-divergences have appeared in various literatures~\cite{chatzisavvas2005information,bigi2003using,wang2006groupwise,nishii2006image,bayarri2008generalization,nielsen2013jeffreys,nielsen2020generalization}.

For continuous distributions, the KL-divergence is known to have computational difficulty. 
To be more specific, if $q$ takes a small value relative to $p$, the value of $D_{KL}[p\|q]$ may diverge to infinity.
The simplest idea to avoid this is to use very small $\epsilon>0$ and modify $D_{KL}[p\|q]$ as follows:

\begin{equation*}
    D^+_{KL}[p\|q] \coloneqq \int_{\mathcal{X}} p\ln\frac{p}{q+\epsilon} d\mu.
\end{equation*}

However, such an extension is unnatural in the sense that $q+\epsilon$ no longer satisfies the condition for a probability measure: $\int_\mathcal{X} \epsilon + q(x) d\mu(x)\neq1$.
As a more natural way to stabilize KL-divergence, the following skew divergence have been proposed:
\begin{Definition}{(Skew divergence~\cite{lin1991divergence,lee1999measures})}
The skew divergence $D^{(\lambda)}_S[p\|q]:\mathcal{P}\times\mathcal{P}\to[0,\infty]$ is defined between two Radon–Nikodym densities $p$ and $q$ of $\mu$-absolutely continuous probability measures by

\begin{align}
\notag
    D^{(\lambda)}_S[p\|q] &\coloneqq D_{KL}[p\|(1-\lambda)p + \lambda q] \\
    &= \int_\mathcal{X} p\ln\frac{p}{(1-\lambda)p + \lambda q} d\mu,
\end{align}

where $\lambda\in[0,1]$.
\end{Definition}
Skew divergences have been
experimentally shown to perform better in applications such as natural language processing~\cite{lee2001effectiveness,xiao2019dual}, image recognition~\cite{carvalho2014skew,revathi2014cotton} and graph analysis~\cite{ahmed2011network,hughes2007lexical}.
In addition, there is research on quantum generalization of skew divergence~\cite{audenaert2014quantum}.

The main contributions of this paper are summarized as follows:
\begin{itemize}
    \item Several symmetrized divergences or skew divergences are generalized from an information geometry perspective. 
    \item It is proved that the natural skew divergence for the exponential family is equivalent to the scaled KL-divergence.
    \item Several properties of geometrically generalized skew divergence are proved. Specifically, the functional space associated with the proposed divergence is shown to be a Banach space.
\end{itemize}
Implementation of the proposed divergence is available on GitHub\footnote{\url{https://github.com/nocotan/geodesical_skew_divergence}}. 

\section{\texorpdfstring{$\alpha$}--Geodesical Skew Divergence}

The skew divergence is generalized based on the following function.
\begin{Definition}{($f$-interpolation)}
\label{def:f_interpolation}
For any $a,b,\in\mathbb{R}$,  $\lambda\in[0,1]$ and  $\alpha\in\mathbb{R}$,  $f$-interpolation is defined as

\begin{equation}
    \label{eq:f_interpolation}
    m_f^{(\lambda,\alpha)}(a,b) = f^{-1}_\alpha\Big( (1-\lambda) f_{\alpha}(a) + \lambda f_{\alpha}(b) \Big),
\end{equation}

where

\begin{equation}
    f_\alpha(x) = \begin{cases}
    x^{\frac{1-\alpha}{2}} & (\alpha\neq 1) \\
    \ln x & (\alpha = 1)
    \end{cases}
\end{equation}

is the function that defines the $f$-mean~\cite{hardy1952inequalities}.
\end{Definition}
The $f$-mean function satisfies

\begin{align*}
    \lim_{\alpha\to\infty}f_\alpha(x) &= \begin{cases}
    \infty & (|x| < 1), \\
    1 & (|x| = 1), \\
    0 & (|x| > 1),
    \end{cases} \\
   \lim_{\alpha\to -\infty}f_\alpha(x) &=  \begin{cases}
   0 & (|x| < 1), \\
   1 & (|x| = 1), \\
   \infty & (|x| > 1).
   \end{cases}
\end{align*}

It is easy to see that this family includes various known weighted means including the $e$-mixture and $m$-mixture for $\alpha=\pm 1$ in the literature of information geometry~\cite{Amari2016-pi}:

\begin{align}
    (\alpha = 1)\ \ &\ m_f^{(\lambda,1)}(a,b) = \exp\{(1-\lambda)\ln a + \lambda \ln b\} \nonumber \\
    (\alpha = -1)\ \ &\ m_f^{(\lambda, -1)}(a,b) = (1-\lambda)a + \lambda b \nonumber \\
    (\alpha = 0)\ \ &\ m_f^{(\lambda, 0)}(a,b) = \Big((1-\lambda)\sqrt{a} + \lambda\sqrt{b}\Big)^2 \nonumber \\
    (\alpha = 3)\ \ &\ m_f^{(\lambda, 3)}(a,b) = \frac{1}{(1-\lambda)\frac{1}{a} + \lambda\frac{1}{b}} \nonumber \\
    (\alpha=\infty)\ \ &\ m_f^{(\lambda, \infty)}(a,b) = \min\{a, b\} \nonumber \\
    (\alpha=-\infty)\ \ &\ m_f^{(\lambda, -\infty)}(a,b) = \max\{a, b\} \nonumber
\end{align}

\if
For any $\bm{u},\bm{v}\in\mathbb{R}^d\ (d>0)$, we write
\begin{align*}
    \bm{w} &= m_f^{(\lambda, \alpha)}(\bm{u}, \bm{v}), \\ 
    \mbox{where}
    \quad
    \bm{w}_i &= m_f^{(\lambda, \alpha)}(\bm{u}_i, \bm{v}_i).
\end{align*}
\fi

The inverse function $f^{-1}_\alpha$ is convex when $\alpha \in [-1,1]$, and concave when $\alpha \in (-\infty, -1] \cup (1, \infty)$.
It is worth noting that the $f$-interpolation is a special case of the Kolmogorov-Nagumo average~\cite{kolmogorov1930notion,nagumo1930klasse,nielsen2014generalized} when $\alpha$ is restricted in the interval $[-1,1]$.

In order to consider the geometric meaning of this function, the notion of the statistical manifold is introduced.
\subsection{Statistical Manifold}
Let

\begin{equation}
    \mathcal{S} = \{p_{\bm{\xi}} = p(\bm{x};\bm{\xi})  \in \mathcal{P}| \bm{\xi}=(\xi^1,\dots,\xi^n)\in\Xi\}
\end{equation}

be a family of probability distribution on $\mathcal{X}$, where each element $p_{\bm{\xi}}$ is parameterized by $n$ real-valued variables $\bm{\xi}=(\xi^1,\dots,\xi^n)\in\Xi\subset\mathbb{R}^n$.
The set $\mathcal{S}$ is called a statistical model and is a subset of $\mathcal{P}$.
We also denote $(\mathcal{S}, g_{ij})$ as a statistical model equipped with the Riemannian metric $g_{ij}$.
In particular, let $g_{ij}$ be the Fisher-Rao metric, which is the Riemannian metric induced from the Fisher information matrix~\cite{amari2012differential}.

In the rest of this paper, the abbreviations
\begin{align*}
    \partial_i &= \partial_{\xi^i} = \frac{\partial}{\partial\xi^i}, \\
    \ell &= \ell_{\bm{x}}(\bm{\xi}) = \ln p_{\bm{\xi}}(\bm{x})
\end{align*}
are used.

\begin{Definition}{(Christoffel symbols)}
Let $g_{ij}$ be a Riemannian metric, particularly the Fisher information matrix, then the Christoffel symbols are given by

\begin{equation}
    \Gamma_{ij,k} = \frac{1}{2}\Big(\partial_i g_{jk} + \partial_j g_{ik} - \partial_k g_{ij}\Big), \quad 
    i,j,k = 1, \dots, n.
\end{equation}

\end{Definition}

\begin{Definition}{(Levi-Civita connection)}
Let $g$ be a Fisher-Riemannian metric on $\mathcal{S}$ which is a 2-covariant tensor defined locally by

\begin{equation*}
    g(X_{\bm{\xi}}, Y_{\bm{\xi}}) = \sum^n_{i,j=1}g_{ij}(\bm{\xi})a^i(\bm{\xi})b^j(\bm{\xi}),
\end{equation*}

where $X_{\bm{\xi}}=\sum^n_{i=1}a^i(\bm{\xi})\partial_i p_{\bm{\xi}}$ and $Y_{\bm{\xi}}=\sum^n_{i=1}b^i(\bm{\xi})\partial_i p_{\bm{\xi}}$ are vector fields in the 0-representation on $\mathcal{S}$.
Then, its associated Levi-Civita connection $\nabla^{(0)}$ is defined by

\begin{equation}
    g(\nabla^{(0)}_{\partial_i}\partial_j, \partial_k) = \Gamma_{ij,k}.
\end{equation}

\end{Definition}
The fact that $\nabla^{(0)}$ is metrical connection can be written locally as

\begin{equation}
    \partial_k g_{ij} = \Gamma_{ki,j} + \Gamma_{kj,i}.
\end{equation}

It is worth noting that the superscript $\alpha$ of $\nabla^{(\alpha)}$ corresponds to a parameter of the connection.
Based on the above definitions several connections parameterized by the parameter $\alpha$ are introduced.
The case $\alpha=0$ corresponds to the Levi-Civita connection induced by the Fisher metric.

\begin{Definition}{($\nabla^{(1)}$-connection)}
Let $g$ be the Fisher-Riemannian metric on $\mathcal{S}$ which is a 2-covariant tensor.
Then, the $\nabla^{(1)}$-connection is defined by

\begin{equation}
    g(\nabla^{(1)}_{\partial_i}\partial_j, \partial_k) = \mathbb{E}_{\bm{\xi}}[\partial_i\partial_j\ell \partial_k\ell].
\end{equation}

It can also be expressed equivalently by explicitly writting as the Christoffel coefficients

\begin{equation}
    \Gamma^{(1)}_{ij,k}(\bm{\xi}) = \mathbb{E}_{\bm{\xi}}[\partial_i\partial_j\ell\partial_k\ell].
\end{equation}

\end{Definition}

\begin{Definition}{($\nabla^{(-1)}$-connection)}
Let $g$ be the Fisher-Riemannian metric on $\mathcal{S}$ which is a 2-covariant tensor.
Then, the $\nabla^{(-1)}$-connection is defined by

\begin{equation}
    g(\nabla^{(-1)}_{\partial_i}\partial_j, \partial_k) = \Gamma^{(-1)}_{ij,k}(\bm{\xi}) = \mathbb{E}_{\bm{\xi}}[(\partial_i\partial_j\ell + \partial_i\ell\partial_j\ell)\partial_k\ell].
\end{equation}

\end{Definition}
In the following the $\nabla$-flatness is considered with respect to the corresponding coordinates system.
More details can be found in~\cite{Amari2016-pi}.
\begin{Proposition}
The exponential family is $\nabla^{(1)}$-flat.
\end{Proposition}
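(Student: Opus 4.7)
The plan is to exhibit a coordinate system on the exponential family in which the Christoffel symbols $\Gamma^{(1)}_{ij,k}$ vanish identically; that coordinate system is, unsurprisingly, the natural (canonical) parameters.

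First I would write any member of the exponential family in canonical form
\begin{equation*}
p(\bm{x};\bm{\theta}) = \exp\Bigl(\sum_{i} \theta^i T_i(\bm{x}) - \psi(\bm{\theta}) + C(\bm{x})\Bigr),
\end{equation*}
so that the log-likelihood is $\ell = \sum_i \theta^i T_i(\bm{x}) - \psi(\bm{\theta}) + C(\bm{x})$. A direct differentiation gives $\partial_i \ell = T_i(\bm{x}) - \partial_i \psi(\bm{\theta})$ and, crucially, $\partial_i \partial_j \ell = -\partial_i\partial_j \psi(\bm{\theta})$. The key observation is that this second derivative is a deterministic function of $\bm{\theta}$: it does not depend on $\bm{x}$ at all.

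Next I would substitute into the definition of the $\nabla^{(1)}$ Christoffel coefficient and pull the deterministic factor out of the expectation:
\begin{equation*}
\Gamma^{(1)}_{ij,k}(\bm{\theta}) = \mathbb{E}_{\bm{\theta}}[\partial_i\partial_j \ell \cdot \partial_k \ell] = -\partial_i\partial_j \psi(\bm{\theta})\cdot \mathbb{E}_{\bm{\theta}}[\partial_k \ell].
\end{equation*}
Now I invoke the standard identity that the score has vanishing mean, $\mathbb{E}_{\bm{\theta}}[\partial_k \ell] = 0$, which follows by differentiating the normalization $\int p_{\bm{\theta}}\, d\mu = 1$ under the integral sign. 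This immediately yields $\Gamma^{(1)}_{ij,k}(\bm{\theta}) \equiv 0$ throughout the natural parameter chart, and hence the exponential family is $\nabla^{(1)}$-flat in the sense stated just before the proposition.

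I do not expect a real obstacle here; the argument is a short computation. The only point that deserves a line of care is making explicit that flatness is claimed with respect to the natural coordinate system $\bm{\theta}$, matching the remark in the excerpt that $\nabla$-flatness is considered relative to a corresponding chart. If one wished to be pedantic, one could additionally note that interchanging the derivative and the integral is justified by the standard regularity assumptions on exponential families, but this is routine and I would not belabor it.
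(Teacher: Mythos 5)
Your argument is correct and is exactly the standard one: in the natural parameter chart $\partial_i\partial_j\ell=-\partial_i\partial_j\psi(\bm{\theta})$ is deterministic, so $\Gamma^{(1)}_{ij,k}=-\partial_i\partial_j\psi\cdot\mathbb{E}_{\bm{\theta}}[\partial_k\ell]=0$ by the vanishing of the expected score, exhibiting an affine coordinate system for $\nabla^{(1)}$. The paper itself states this proposition without proof, deferring to Amari's monograph, and your computation is precisely the argument given there, including the correct caveat that flatness is asserted relative to the natural coordinate system.
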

\begin{Proposition}
The exponential family is $\nabla^{(-1)}$-flat if and only if it is $\nabla^{(0)}$-flat.
\end{Proposition}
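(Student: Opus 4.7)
The plan is to work in the natural (canonical) parameter coordinates $\bm{\theta}$ of the exponential family, $\ell(\bm{x};\bm{\theta}) = \langle \bm{\theta}, \bm{T}(\bm{x}) \rangle - \psi(\bm{\theta})$, and compare the three Christoffel coefficient tensors directly at the level of the log-partition function $\psi$. These are the ``corresponding coordinates'' in which the previous proposition asserts $\Gamma^{(1)}_{ij,k}(\bm{\theta}) \equiv 0$; the point is then to show that in the same coordinates, $\Gamma^{(-1)}$ and $\Gamma^{(0)}$ are proportional.

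First, since $\partial_i\partial_j\ell = -\partial_i\partial_j\psi$ is deterministic and $\mathbb{E}_{\bm{\theta}}[\partial_k\ell]=0$, the $\nabla^{(1)}$ coefficients vanish identically, as in the preceding proposition. From the definition of $\nabla^{(-1)}$ I would read off
\begin{equation*}
\Gamma^{(-1)}_{ij,k}(\bm{\theta}) = \Gamma^{(1)}_{ij,k}(\bm{\theta}) + T_{ijk}(\bm{\theta}) = T_{ijk}(\bm{\theta}),
\end{equation*}
where $T_{ijk} \coloneqq \mathbb{E}_{\bm{\theta}}[\partial_i\ell\,\partial_j\ell\,\partial_k\ell]$ is the Amari--Chentsov cubic (skewness) tensor; a standard cumulant computation gives $T_{ijk}(\bm{\theta}) = \partial_i\partial_j\partial_k\psi(\bm{\theta})$, which is totally symmetric in its indices.

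Second, the Fisher metric in these coordinates equals the Hessian $g_{ij}(\bm{\theta}) = \partial_i\partial_j\psi(\bm{\theta})$, so $\partial_k g_{ij} = \partial_i\partial_j\partial_k\psi$ is also totally symmetric. Substituting into the Levi--Civita formula
\begin{equation*}
\Gamma^{(0)}_{ij,k} = \tfrac{1}{2}\bigl(\partial_i g_{jk} + \partial_j g_{ik} - \partial_k g_{ij}\bigr),
\end{equation*}
the two positive terms add and one is subtracted, leaving
\begin{equation*}
\Gamma^{(0)}_{ij,k}(\bm{\theta}) = \tfrac{1}{2}\,\partial_i\partial_j\partial_k\psi(\bm{\theta}) = \tfrac{1}{2}\,\Gamma^{(-1)}_{ij,k}(\bm{\theta}).
\end{equation*}

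Since $\Gamma^{(0)}$ and $\Gamma^{(-1)}$ differ in the natural coordinates only by a factor of $\tfrac{1}{2}$, one vanishes identically if and only if the other does, which is precisely the equivalence of $\nabla^{(-1)}$-flatness and $\nabla^{(0)}$-flatness for the exponential family. The main obstacle is conceptual rather than technical: recognizing that in the natural coordinates every Christoffel tensor of the family reduces to a scalar multiple of $\partial_i\partial_j\partial_k\psi$, after which the proportionality, and hence the equivalence, is immediate.
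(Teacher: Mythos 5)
Your proof is correct. The paper itself states this proposition without proof (deferring to Amari's book), but it supplies the two ingredients from which it follows in one line: the preceding proposition that the exponential family is $\nabla^{(1)}$-flat, and the later identity $\Gamma^{(0)}_{ij,k}=\tfrac{1}{2}\bigl(\Gamma^{(-1)}_{ij,k}+\Gamma^{(1)}_{ij,k}\bigr)$, which with $\Gamma^{(1)}_{ij,k}\equiv 0$ gives $\Gamma^{(0)}_{ij,k}=\tfrac{1}{2}\Gamma^{(-1)}_{ij,k}$ directly. You arrive at the same proportionality by instead computing everything explicitly from the log-partition function: $\Gamma^{(-1)}_{ij,k}=T_{ijk}=\partial_i\partial_j\partial_k\psi$ via the cumulant identity, and $\Gamma^{(0)}_{ij,k}=\tfrac{1}{2}\partial_i\partial_j\partial_k\psi$ from total symmetry of $\partial_k g_{ij}$ in the Levi--Civita formula; both computations are right, and your route has the virtue of being self-contained rather than leaning on the averaging relation. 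One point worth making explicit, which you handle only implicitly by invoking the ``corresponding coordinates'': the conclusion ``one tensor vanishes iff the other does'' settles the proposition only under the paper's stated convention that $\nabla$-flatness means vanishing of the Christoffel coefficients in the distinguished (natural) coordinate system. If flatness were instead taken in the curvature sense, proportionality of Christoffel symbols in one chart would not by itself transfer vanishing of curvature (the quadratic $\Gamma\Gamma$ terms scale differently), so a sentence acknowledging which definition is in force would make the argument airtight.
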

\begin{Proposition}
The mixture family is $\nabla^{(-1)}$-flat.
\end{Proposition}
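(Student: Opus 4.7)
The plan is to exhibit a coordinate system in which the $\nabla^{(-1)}$-Christoffel coefficients vanish identically, which is exactly the statement of flatness. The natural candidate is the mixing-weight coordinate system itself, in which the density $p_{\bm{\xi}}$ is, by definition, an affine function of $\bm{\xi}$.

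First, I would rewrite $\Gamma^{(-1)}_{ij,k}(\bm{\xi}) = \mathbb{E}_{\bm{\xi}}[(\partial_i\partial_j\ell + \partial_i\ell\partial_j\ell)\partial_k\ell]$ by simplifying the parenthesized factor. Using $\ell = \ln p_{\bm{\xi}}$ and the quotient rule,
\begin{equation*}
\partial_i\partial_j\ell = \frac{\partial_i\partial_j p_{\bm{\xi}}}{p_{\bm{\xi}}} - \frac{\partial_i p_{\bm{\xi}}\,\partial_j p_{\bm{\xi}}}{p_{\bm{\xi}}^2} = \frac{\partial_i\partial_j p_{\bm{\xi}}}{p_{\bm{\xi}}} - \partial_i\ell\,\partial_j\ell,
\end{equation*}
so the combination $\partial_i\partial_j\ell + \partial_i\ell\,\partial_j\ell$ collapses to $\partial_i\partial_j p_{\bm{\xi}}/p_{\bm{\xi}}$. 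This identity is the key algebraic observation and is independent of the family.

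Next I would invoke the defining property of a mixture family: in the mixing coordinates $\bm{\xi}$, the density has the affine form $p_{\bm{\xi}}(\bm{x}) = p_0(\bm{x}) + \sum_{i=1}^n \xi^i\bigl(p_i(\bm{x}) - p_0(\bm{x})\bigr)$, whose second partial derivatives with respect to any $\xi^i, \xi^j$ vanish identically. Substituting $\partial_i\partial_j p_{\bm{\xi}} \equiv 0$ into the simplified expression yields $\partial_i\partial_j\ell + \partial_i\ell\,\partial_j\ell \equiv 0$, and therefore
\begin{equation*}
\Gamma^{(-1)}_{ij,k}(\bm{\xi}) = \mathbb{E}_{\bm{\xi}}\bigl[0\cdot\partial_k\ell\bigr] = 0
\end{equation*}
for all $i,j,k$ and all $\bm{\xi}\in\Xi$.

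Since all the Christoffel coefficients of $\nabla^{(-1)}$ vanish in this coordinate system, the mixing coordinates are affine for the $(-1)$-connection, which is the definition of $\nabla^{(-1)}$-flatness. The only mild obstacle I anticipate is being explicit that the affine-combination form genuinely defines the mixture family in the sense used in the paper and that the resulting $\partial_i\partial_j p_{\bm{\xi}}=0$ holds pointwise in $\bm{x}$, justifying the interchange with the expectation; once this is settled, the computation above is essentially immediate.
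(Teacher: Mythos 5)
The paper states this proposition without proof (it is one of four flatness facts listed with only a pointer to Amari's textbook), so there is no in-paper argument to compare against; your proof is the standard one and is correct. The key identity $\partial_i\partial_j\ell + \partial_i\ell\,\partial_j\ell = \partial_i\partial_j p_{\bm{\xi}}/p_{\bm{\xi}}$ correctly reduces $\Gamma^{(-1)}_{ij,k}$ to an expectation involving the second derivatives of the density, which vanish identically in the mixing coordinates because $p_{\bm{\xi}}$ is affine there, and exhibiting such an affine coordinate system is exactly what $\nabla^{(-1)}$-flatness means.
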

\begin{Proposition}
The mixture family is $\nabla^{(1)}$-flat if and only if it is $\nabla^{(0)}$-flat.
\end{Proposition}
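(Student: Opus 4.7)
The plan is to reduce the claim to a pointwise algebraic identity by working in the affine coordinate system of the mixture family. By the preceding proposition the mixture family is $\nabla^{(-1)}$-flat, so in those coordinates $\Gamma^{(-1)}_{ij,k}\equiv 0$. I would then invoke the identity
\begin{equation*}
\Gamma^{(1)}_{ij,k}+\Gamma^{(-1)}_{ij,k}=2\,\Gamma^{(0)}_{ij,k},
\end{equation*}
which holds in every coordinate system, so that substitution yields $\Gamma^{(1)}_{ij,k}=2\,\Gamma^{(0)}_{ij,k}$ in mixture coordinates. Both sets of Christoffel coefficients must then vanish simultaneously, which is the claimed equivalence under the paper's stated convention that $\nabla$-flatness is considered with respect to the natural coordinate system of the family.

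The only non-routine step is therefore verifying the identity $\Gamma^{(1)}+\Gamma^{(-1)}=2\Gamma^{(0)}$ from the definitions in the excerpt. I would begin by differentiating $g_{jk}(\bm{\xi})=\mathbb{E}_{\bm{\xi}}[\partial_j\ell\,\partial_k\ell]$ under the integral sign, using $\partial_i p_{\bm{\xi}}=p_{\bm{\xi}}\,\partial_i\ell$, to obtain
\begin{equation*}
\partial_i g_{jk}=\mathbb{E}_{\bm{\xi}}[\partial_i\ell\,\partial_j\ell\,\partial_k\ell]+\mathbb{E}_{\bm{\xi}}[\partial_i\partial_j\ell\,\partial_k\ell]+\mathbb{E}_{\bm{\xi}}[\partial_j\ell\,\partial_i\partial_k\ell].
\end{equation*}
Plugging the three analogous expansions into $\Gamma^{(0)}_{ij,k}=\tfrac{1}{2}(\partial_i g_{jk}+\partial_j g_{ik}-\partial_k g_{ij})$ and cancelling pairs of terms via the $i\leftrightarrow j$ symmetry of $\mathbb{E}_{\bm{\xi}}[\partial_i\partial_j\ell\,\partial_k\ell]$ collapses the expression to
\begin{equation*}
\Gamma^{(0)}_{ij,k}=\mathbb{E}_{\bm{\xi}}[\partial_i\partial_j\ell\,\partial_k\ell]+\tfrac{1}{2}\mathbb{E}_{\bm{\xi}}[\partial_i\ell\,\partial_j\ell\,\partial_k\ell].
\end{equation*}
Adding the formulas given in the definitions of $\Gamma^{(1)}_{ij,k}$ and $\Gamma^{(-1)}_{ij,k}$ yields exactly twice this quantity, which establishes the identity.

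The main obstacle is really just the bookkeeping in this derivation together with the justification for differentiation under the integral sign; the latter holds under the standard regularity hypotheses implicit in the statistical-manifold framework (smoothness of $p_{\bm{\xi}}$ in $\bm{\xi}$ and integrability of the relevant log-derivative products), which I would invoke without re-deriving. The overall structure of the argument mirrors the proof of the analogous equivalence for the exponential family stated above, with the roles of $\nabla^{(1)}$ and $\nabla^{(-1)}$ interchanged.
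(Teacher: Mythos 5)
The paper states this proposition without proof, deferring to the cited reference (Amari, 2016), so there is no in-paper argument to compare against; your proof is correct and is the standard textbook one. Your reduction is exactly right: in the mixture family's affine coordinates $\Gamma^{(-1)}_{ij,k}\equiv 0$, so the identity $\Gamma^{(1)}_{ij,k}+\Gamma^{(-1)}_{ij,k}=2\Gamma^{(0)}_{ij,k}$ (which the paper itself proves as the relation $\nabla^{(0)}=\tfrac{1}{2}(\nabla^{(-1)}+\nabla^{(1)})$, and which you rederive from $\partial_i g_{jk}=\mathbb{E}_{\bm{\xi}}[\partial_i\ell\,\partial_j\ell\,\partial_k\ell]+\mathbb{E}_{\bm{\xi}}[\partial_i\partial_j\ell\,\partial_k\ell]+\mathbb{E}_{\bm{\xi}}[\partial_j\ell\,\partial_i\partial_k\ell]$ with correct bookkeeping) gives $\Gamma^{(1)}_{ij,k}=2\Gamma^{(0)}_{ij,k}$ in those coordinates, whence simultaneous vanishing. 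You are also right to flag that the equivalence hinges on the paper's convention that flatness is judged in the family's own coordinate system: with the coordinate-free definition the statement would actually fail (by duality of curvature every mixture family has $R^{(1)}=0$, while e.g.\ the probability simplex is not Levi-Civita flat), so explicitly invoking that convention is not a cosmetic remark but a necessary part of the argument, and your proof handles it correctly.
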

\begin{Proposition}
The relation between the foregoing three connections is given by

\begin{equation}
    \nabla^{(0)} = \frac{1}{2}\Big(\nabla^{(-1)} + \nabla^{(1)}\Big).
\end{equation}

\end{Proposition}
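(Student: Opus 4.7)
The plan is to establish the identity at the level of Christoffel coefficients, since a linear relation between affine connections is equivalent, on a local chart, to the same linear relation between their associated Christoffel symbols $\Gamma^{(\alpha)}_{ij,k}$. The target identity to verify is therefore
\begin{equation*}
\Gamma^{(0)}_{ij,k}(\bm{\xi}) \;=\; \tfrac{1}{2}\bigl(\Gamma^{(-1)}_{ij,k}(\bm{\xi}) + \Gamma^{(1)}_{ij,k}(\bm{\xi})\bigr),
\end{equation*}
where $\Gamma^{(0)}_{ij,k}$ is the Levi-Civita symbol built from the Fisher metric $g_{ij} = \mathbb{E}_{\bm{\xi}}[\partial_i\ell\,\partial_j\ell]$, and the $\alpha = \pm 1$ symbols have already been given in the two preceding definitions.

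First I would expand the right-hand side directly from the definitions. Averaging the two expressions produces
\begin{equation*}
\tfrac{1}{2}\bigl(\Gamma^{(-1)}_{ij,k} + \Gamma^{(1)}_{ij,k}\bigr) \;=\; \mathbb{E}_{\bm{\xi}}\bigl[\partial_i\partial_j\ell\,\partial_k\ell\bigr] + \tfrac{1}{2}\,\mathbb{E}_{\bm{\xi}}\bigl[\partial_i\ell\,\partial_j\ell\,\partial_k\ell\bigr].
\end{equation*}
Next I would compute $\Gamma^{(0)}_{ij,k}$ from its definition by differentiating $g_{ij}$ under the integral sign. Using the elementary identity $\partial_i p_{\bm{\xi}} = p_{\bm{\xi}}\,\partial_i \ell$, a single derivative of the metric gives
\begin{equation*}
\partial_i g_{jk} \;=\; \mathbb{E}_{\bm{\xi}}\bigl[\partial_i\ell\,\partial_j\ell\,\partial_k\ell\bigr] + \mathbb{E}_{\bm{\xi}}\bigl[\partial_i\partial_j\ell\,\partial_k\ell\bigr] + \mathbb{E}_{\bm{\xi}}\bigl[\partial_j\ell\,\partial_i\partial_k\ell\bigr].
\end{equation*}

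Substituting the three analogous expansions into $\Gamma^{(0)}_{ij,k} = \tfrac{1}{2}(\partial_i g_{jk} + \partial_j g_{ik} - \partial_k g_{ij})$, the mixed second-derivative terms $\mathbb{E}[\partial_j\ell\,\partial_i\partial_k\ell]$ and $\mathbb{E}[\partial_i\ell\,\partial_j\partial_k\ell]$ coming from $\partial_i g_{jk}$ and $\partial_j g_{ik}$ cancel exactly against the corresponding terms in $-\partial_k g_{ij}$, while the triple first-derivative terms combine as $\tfrac{1}{2}(1+1-1) = \tfrac{1}{2}$. The surviving expression is
\begin{equation*}
\Gamma^{(0)}_{ij,k} \;=\; \mathbb{E}_{\bm{\xi}}\bigl[\partial_i\partial_j\ell\,\partial_k\ell\bigr] + \tfrac{1}{2}\,\mathbb{E}_{\bm{\xi}}\bigl[\partial_i\ell\,\partial_j\ell\,\partial_k\ell\bigr],
\end{equation*}
which coincides with the right-hand side computed above, so the Christoffel identity holds and thus $\nabla^{(0)} = \tfrac{1}{2}(\nabla^{(-1)} + \nabla^{(1)})$.

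There is no deep obstacle; the argument is essentially bookkeeping. The only step requiring some care is tracking the symmetric cancellations among the six ``mixed'' terms when assembling $\partial_i g_{jk} + \partial_j g_{ik} - \partial_k g_{ij}$, and justifying the differentiation under the integral (which is standard under the regularity assumptions that make the Fisher metric well-defined on the statistical model $\mathcal{S}$).
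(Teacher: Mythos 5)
Your proof is correct and follows essentially the same route as the paper's: reduce the connection identity to the level of Christoffel coefficients and average the $\alpha=\pm1$ expressions $\Gamma^{(\pm1)}_{ij,k}$. The only difference is that you additionally derive the identity $\Gamma^{(0)}_{ij,k}=\mathbb{E}_{\bm{\xi}}\bigl[(\partial_i\partial_j\ell+\tfrac{1}{2}\partial_i\ell\,\partial_j\ell)\partial_k\ell\bigr]$ from the Levi-Civita definition by differentiating the Fisher metric under the integral sign, a step the paper's proof takes for granted in its final line, so your write-up is in fact the more self-contained of the two.
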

\begin{proof}
It suffices to show

\begin{equation*}
    \Gamma^{(0)}_{ij,k} = \frac{1}{2}\Big(\Gamma^{(-1)}_{ij,k} + \Gamma^{(1)}_{ij,k}\Big).
\end{equation*}

From the definitions of $\Gamma^{(-1)}$ and $\Gamma^{(1)}$,

\begin{align*}
    \Gamma^{(-1)}_{ij,k} + \Gamma^{(1)}_{ij,k} &= \mathbb{E}_{\bm{\xi}}[(\partial_i\partial_j\ell + \partial_i\ell\partial_j\ell)\partial_k\ell] + \mathbb{E}_{\bm{\xi}}[\partial_i\partial_j\ell\partial_k\ell] \\
    &= \mathbb{E}_{\bm{\xi}}[(2\partial_i\partial_j\ell + \partial_i\ell\partial_j\ell)\partial_k\ell] \\
    &= 2\mathbb{E}_{\bm{\xi}}\Big[(\partial_i\partial_j\ell + \frac{1}{2}\partial_i\ell\partial_j\ell)\partial_k\ell\Big] \\
    &= 2\Gamma^{(0)}_{ij,k},
\end{align*}

which proves the proposition.
\end{proof}

The connections $\nabla^{(-1)}$ and $\nabla^{(1)}$ are two special connections on $\mathcal{S}$ with respect to the mixture family and the exponential family, respectively.
Moreover, they are related by the duality condition, and the following 1-parameter family of connections are defined.
\begin{Definition}{($\nabla^{(\alpha)}$-connection)}
For $\alpha\in\mathbb{R}$, the $\nabla^{(\alpha)}$-connection on the statistical model $\mathcal{S}$ is defined as

\begin{equation}
    \nabla^{(\alpha)} = \frac{1+\alpha}{2}\nabla^{(1)} + \frac{1-\alpha}{2}\nabla^{(-1)}.
\end{equation}

\end{Definition}

\begin{Proposition}
The components $\Gamma^{(\alpha)}_{ij,k}$ can be written as

\begin{equation}
    \Gamma^{(\alpha)}_{ij,k} = \mathbb{E}_{\bm{\xi}}\Biggl[\Big(\partial_i\partial_j\ell + \frac{1-\alpha}{2}\partial_i\ell\partial_j\ell\Big)\partial_k\ell\Biggr].
\end{equation}

\end{Proposition}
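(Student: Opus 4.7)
The plan is to exploit linearity: since the definition of $\nabla^{(\alpha)}$ is a convex combination of $\nabla^{(1)}$ and $\nabla^{(-1)}$, the corresponding Christoffel coefficients inherit the same convex combination structure. So I would first record that
\begin{equation*}
    \Gamma^{(\alpha)}_{ij,k} = \frac{1+\alpha}{2}\Gamma^{(1)}_{ij,k} + \frac{1-\alpha}{2}\Gamma^{(-1)}_{ij,k},
\end{equation*}
which follows directly from $g(\nabla^{(\alpha)}_{\partial_i}\partial_j,\partial_k)$ being $\mathbb{R}$-linear in the connection slot, together with the Definition of $\nabla^{(\alpha)}$.

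Next I would substitute the explicit expressions from the preceding definitions, namely $\Gamma^{(1)}_{ij,k} = \mathbb{E}_{\bm{\xi}}[\partial_i\partial_j\ell\,\partial_k\ell]$ and $\Gamma^{(-1)}_{ij,k} = \mathbb{E}_{\bm{\xi}}[(\partial_i\partial_j\ell + \partial_i\ell\,\partial_j\ell)\partial_k\ell]$, and use linearity of expectation to combine them inside a single expectation. Collecting terms, the coefficients of $\partial_i\partial_j\ell\,\partial_k\ell$ are $\tfrac{1+\alpha}{2}+\tfrac{1-\alpha}{2}=1$, while the coefficient of $\partial_i\ell\,\partial_j\ell\,\partial_k\ell$ is $\tfrac{1-\alpha}{2}$, yielding exactly the claimed formula.

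There is essentially no obstacle here beyond a bookkeeping check of the two coefficients; the content of the result lies in the preceding definitions of $\Gamma^{(1)}_{ij,k}$, $\Gamma^{(-1)}_{ij,k}$, and the affine family $\nabla^{(\alpha)}$. As a sanity check, I would verify the boundary cases: at $\alpha=1$ the second term vanishes, recovering $\Gamma^{(1)}_{ij,k}$, and at $\alpha=-1$ the coefficient becomes $1$, recovering $\Gamma^{(-1)}_{ij,k}$; at $\alpha=0$ it reproduces the expression $\mathbb{E}_{\bm{\xi}}[(\partial_i\partial_j\ell + \tfrac{1}{2}\partial_i\ell\,\partial_j\ell)\partial_k\ell]$ that already appeared in the proof of the previous proposition, confirming consistency with $\Gamma^{(0)}_{ij,k}$.
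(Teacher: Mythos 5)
Your argument is correct: the paper states this proposition without proof, and your route --- taking the Christoffel coefficients of the affine combination $\nabla^{(\alpha)} = \frac{1+\alpha}{2}\nabla^{(1)} + \frac{1-\alpha}{2}\nabla^{(-1)}$, substituting the explicit expressions for $\Gamma^{(1)}_{ij,k}$ and $\Gamma^{(-1)}_{ij,k}$, and collecting coefficients ($\frac{1+\alpha}{2}+\frac{1-\alpha}{2}=1$ on the $\partial_i\partial_j\ell\,\partial_k\ell$ term, $\frac{1-\alpha}{2}$ on the $\partial_i\ell\,\partial_j\ell\,\partial_k\ell$ term) --- is exactly the computation the paper implicitly relies on. Your boundary checks at $\alpha=\pm 1$ and $\alpha=0$ are consistent with the definitions and with the $\Gamma^{(0)}_{ij,k}$ expression appearing in the proof of the earlier proposition.
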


The $\alpha$-coordinate system associated with the  $\nabla^{(\alpha)}$-connection is endowed with the $\alpha$-geodesic which is a straight line on the corresponding coordinates system.
Then, we introduce some relevant notions.

\begin{Definition}{($\alpha$-divergence~\cite{Amari1985-mi})}
Let $\alpha$ be a real parameter.
The $\alpha$-divergence between two probability vectors $\bm{p}$ and $\bm{q}$ is defined as

\begin{equation}
    D_\alpha[\bm{p}\| \bm{q}] = \frac{4}{1-\alpha^2}\Big(1 - \sum_i p_i^{\frac{1-\alpha}{2}}q_i^{\frac{1+\alpha}{2}}\Big).
\end{equation}

\end{Definition}
The KL-divergence, which is a special case with $\alpha=1$, induces the linear connection $\nabla^{(1)}$ as follows.
\begin{Proposition}
The diagonal part of the third mixed derivatives of the KL-divergence is the negative of the Christoffel symbol:

\begin{equation}
    -\partial_{\bm{\xi}^i}\partial_{\bm{\xi}^j}\partial_{\bm{\xi}^k_0}D_{KL}[p_{\bm{\xi}_0}\| p_{\bm{\xi}}]\Big|_{\bm{\xi}=\bm{\xi}_0} = \Gamma^{(1)}_{ij,k}(\bm{\xi}_0).
\end{equation}

\end{Proposition}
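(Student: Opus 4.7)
The plan is to split $D_{KL}[p_{\bm{\xi}_0}\|p_{\bm{\xi}}]$ into the differential entropy of $p_{\bm{\xi}_0}$ (independent of $\bm{\xi}$) plus a cross term, and then differentiate in an order that makes the calculation transparent. Explicitly, I would write
\begin{equation*}
    D_{KL}[p_{\bm{\xi}_0}\|p_{\bm{\xi}}]
    = \int p_{\bm{\xi}_0}\ln p_{\bm{\xi}_0}\, d\mu - \int p_{\bm{\xi}_0}\,\ell_{\bm{x}}(\bm{\xi})\, d\mu.
\end{equation*}
Here I am using the paper's abbreviation $\ell_{\bm{x}}(\bm{\xi})=\ln p_{\bm{\xi}}(\bm{x})$.

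First I would differentiate twice in the variables $\xi^i$ and $\xi^j$ (the arguments of the second slot). The entropy term drops out because it is independent of $\bm{\xi}$, yielding
\begin{equation*}
    \partial_{\xi^i}\partial_{\xi^j} D_{KL}[p_{\bm{\xi}_0}\|p_{\bm{\xi}}]
    = -\int p_{\bm{\xi}_0}\,\partial_i\partial_j \ell_{\bm{x}}(\bm{\xi})\, d\mu,
\end{equation*}
assuming the usual regularity that allows one to exchange integration and differentiation (which is standard for families where the Fisher information matrix is well defined).

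Next I would differentiate once in $\xi_0^k$ (the argument of the first slot). Since $\partial_{\xi_0^k} p_{\bm{\xi}_0} = p_{\bm{\xi}_0}\,\partial_k \ell_{\bm{x}}(\bm{\xi}_0)$, this yields
\begin{equation*}
    \partial_{\xi^i}\partial_{\xi^j}\partial_{\xi_0^k} D_{KL}[p_{\bm{\xi}_0}\|p_{\bm{\xi}}]
    = -\int p_{\bm{\xi}_0}\,\partial_k\ell_{\bm{x}}(\bm{\xi}_0)\,\partial_i\partial_j \ell_{\bm{x}}(\bm{\xi})\, d\mu,
\end{equation*}
using commutativity of partial derivatives in the independent coordinates $\bm{\xi}$ and $\bm{\xi}_0$. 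Setting $\bm{\xi}=\bm{\xi}_0$ and recognising the integral as an expectation under $p_{\bm{\xi}_0}$ gives
\begin{equation*}
    \partial_{\xi^i}\partial_{\xi^j}\partial_{\xi_0^k} D_{KL}[p_{\bm{\xi}_0}\|p_{\bm{\xi}}]\Big|_{\bm{\xi}=\bm{\xi}_0}
    = -\,\mathbb{E}_{\bm{\xi}_0}\!\left[\partial_i\partial_j\ell\,\partial_k\ell\right]
    = -\,\Gamma^{(1)}_{ij,k}(\bm{\xi}_0),
\end{equation*}
and flipping the sign delivers the claim.

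The main obstacle is more bookkeeping than substance: one must keep clear which variables are being differentiated at each step, and justify the interchange of derivative and integral in each pass. The clean trick that avoids unnecessary work is to do the two $\bm{\xi}$-derivatives first, killing the $\bm{\xi}_0$-entropy term before it has to be carried; otherwise one ends up expanding derivatives of $p_{\bm{\xi}_0}(\ln p_{\bm{\xi}_0}-\ln p_{\bm{\xi}})$ and then having to argue that several terms vanish on the diagonal $\bm{\xi}=\bm{\xi}_0$ (via $\int \partial_k p_{\bm{\xi}_0}\, d\mu=0$ and similar identities). Both routes work, but ordering the differentiations as above reduces the proof to a one-line expectation identification.
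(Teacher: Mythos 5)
Your proposal is correct and follows essentially the same route as the paper's proof: differentiate twice in $\bm{\xi}$ first (so the $\bm{\xi}_0$-only entropy term never enters), then once in $\xi_0^k$ via $\partial_{\xi_0^k}p_{\bm{\xi}_0}=p_{\bm{\xi}_0}\partial_k\ell(\bm{\xi}_0)$, and evaluate on the diagonal to recognize $\mathbb{E}_{\bm{\xi}_0}[\partial_i\partial_j\ell\,\partial_k\ell]=\Gamma^{(1)}_{ij,k}(\bm{\xi}_0)$. The only difference is that you make the entropy/cross-entropy decomposition explicit, which the paper leaves implicit.
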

\begin{proof}
The second derivative in the argument $\bm{\xi}$ is given by

\begin{equation*}
    \partial_{\bm{\xi}^i}\partial_{\bm{\xi}^j}D_{KL}[p_{\bm{\xi}_0}\|p_{\bm{\xi}}] = -\int_\mathcal{X} p_{\bm{\xi}_0}(\bm{x})\partial_{\bm{\xi}^i}\partial_{\bm{\xi}^j}\ell_{\bm{x}}(\bm{\xi})d\bm{x},
\end{equation*}

and differentiating it with respect to $\bm{\xi}^k_0$ yields

\begin{align*}
    -\partial_{\bm{\xi}^i}\partial_{\bm{\xi}^j}\partial_{\bm{\xi}^k_0}D_{KL}[p_{\bm{\xi}_0}\| p_{\bm{\xi}}] &= \partial_{\bm{\xi}^k_0}\int_\mathcal{X} p_{\bm{\xi}_0}(\bm{x})\partial_{\bm{\xi}^i}\partial_{\bm{\xi}^j}\ell_{\bm{x}}(\bm{\xi})d\bm{x} \\
    &=  \int_\mathcal{X} p_{\bm{\xi}_0}(\bm{x})\partial_{\bm{\xi}^i}\partial_{\bm{\xi}^j}\ell_{\bm{x}}(\bm{\xi})\partial_{\bm{\xi}^k_0}\ell_{\bm{x}}(\bm{\xi})d\bm{x}.
\end{align*}

Then, considering the diagonal part, one yields

\begin{align*}
    -\partial_{\bm{\xi}^i}\partial_{\bm{\xi}^j}\partial_{\bm{\xi}^k_0}D_{KL}[p_{\bm{\xi}_0}\| p_{\bm{\xi}}]\Big|_{\bm{\xi}=\bm{\xi}_0} &= \mathbb{E}_{\bm{\xi}_0}[\partial_i\partial_j\ell(\bm{\xi})\partial_k\ell(\bm{\xi})]\\
    &= \Gamma^{(1)}_{ij,k}(\bm{\xi}_0).
\end{align*}
\end{proof}

More generally, the $\alpha$-divergence with $\alpha\in\mathbb{R}$ induces the $\nabla^{(\alpha)}$-connection.

\begin{Definition}{($\alpha$-representation~\cite{Amari2009-fz})}
For some positive measure $m_i^{\frac{1-\alpha}{2}}$, the coordinate system $\bm{\theta}=(\theta^i)$ derived from the $\alpha$-divergence is

\begin{equation}
    \theta^i = m_i^{\frac{1-\alpha}{2}} = f_\alpha(m_i)
\end{equation}

and $\theta^i$ is called the $\alpha$-representation of a positive measure $m_i^{\frac{1-\alpha}{2}}$.
\end{Definition}
\begin{Definition}{($\alpha$-geodesic~\cite{Amari2016-pi})}
\label{def:alpha_geodesic}
The $\alpha$-geodesic connecting two probability vectors $p(\bm{x})$ and $q(\bm{x})$ is defined as

\begin{equation}
    r_i(t) = c(t)f^{-1}_{\alpha}\Big\{(1-t)f_{\alpha}(p(x_i)) + t f_\alpha(q(x_i))\Big\}, \quad t \in [0,1]
\end{equation}

where $c(t)$ is determined as

\begin{equation}
    c(t) = \frac{1}{\sum^n_{i=1}r_i(t)}.
\end{equation}

\end{Definition}

It is known that the appropriate reparameterizations for the parameter $t$ is necessary for a rigorous discussion in the space of probability measures~\cite{Ay2017,Morozova1991}.
However, as mentioned in the literature~\cite{Ay2017}, an explicit expression for the reparametrizations $\tau_{p,a}$ and $\tau_{p,q}$ is unknown.
A similar discussion has been made in the derivation of the $\phi_{\beta}$-path~\cite{Eguchi2015}, where it is mentioned that the normalizing factor is unknown in general.
Furthermore, the $f$-mean is not convex depending on the $\alpha$.
For these reasons, it is generally difficult to discuss $\alpha$-geodesics in probability measures by normalization or reparameterization, and to avoid unnecessary complexity, the parameter $t$ is assumed to be appropriately reparameterized.

Let $\psi_\alpha(\bm{\theta}) = \frac{1-\alpha}{2}\sum_{i=1}^{n} m_i$. Then, the dual coordinate system $\bm{\eta}$ is given by $\bm{\eta} = \nabla\psi_\alpha(\bm{\theta})$ as

\begin{equation}
    \eta_i = (\theta^i)^{\frac{1+\alpha}{1-\alpha}} = f_{-\alpha}(m_i).
\end{equation}

Hence it is the ($-\alpha$)-representation of $m_i$.

\subsection{Generalization of Skew Divergences}

From Definition~\ref{def:alpha_geodesic}, the $f$-interpoloation is considered as an unnormalized version of the $\alpha$-geodesic.
Using the notion of geodesics, skew divergence is generalized in terms of information geometry as follows.

\begin{Definition}{($\alpha$-Geodesical Skew Divergence)}
The $\alpha$-geodesical skew divergence $D_{GS}^{(\alpha,\lambda)}:\mathcal{P}\times\mathcal{P}\to[0,\infty]$ is defined between two Radon–Nikodym densities $p$ and $q$ of $\mu$-absolutely continuous probability measures by:

\begin{align}
    D_{GS}^{(\alpha,\lambda)}\Big[p\|q\Big] &\coloneqq D_{KL}\Big[p\|m_f^{(\lambda,\alpha)}(p,q)\Big] \nonumber \\
    &= \int_\mathcal{X} p\ln\frac{p}{m_f^{(\lambda,\alpha)}(p,q)} d\mu,
\end{align}

where $\alpha\in\mathbb{R}$ and $\lambda\in[0,1]$.
\end{Definition}
Some special cases of $\alpha$-geodesical skew divergence are listed below:

\begin{align*}
(\forall\alpha\in\mathbb{R}, \lambda = 1)\ & D_{GS}^{(\alpha,1)} [p\|q] = D_{KL} [p\|q] \\
(\forall\alpha\in\mathbb{R}, \lambda = 0)\ & D_{GS}^{(\alpha,0)} [p\|q] = D_{KL}[p\|p] = 0 \\
(\alpha=1,\forall\lambda\in[0,1])\ & D_{GS}^{(1, \lambda)} [p\|q] = \lambda D_{KL} [p\|q]\ \ (\text{scaled KL-divergence}) \\
(\alpha=-1,\forall\lambda\in[0,1])\ & D_{GS}^{(-1, \lambda)} [p\|q] = D_S^{(\lambda)} [p\|q]\ \ (\text{skew divergence}) \\
(\alpha=0, \forall\lambda\in[0,1])\ & D_{GS}^{(0, \lambda)} [p\|q] = \int_\mathcal{X} p\ln\frac{p}{\{(1-\lambda)\sqrt{p} + \lambda\sqrt{q}\}^2}d\mu \\
(\alpha=3, \forall\lambda\in[0,1])\ & D_{GS}^{(3, \lambda)} [p\|q] = D_S^{(\lambda)} [p\|q] + H(p) + H(q) \\
(\alpha=\infty, \forall{\lambda}\in[0,1])\ & D_{GS}^{(\infty,\lambda)} [p\|q] = \int_\mathcal{X} p\ln\frac{p}{\min\{p, q\}} d\mu \\
(\alpha=-\infty, \forall{\lambda}\in[0,1])\ & D_{GS}^{(-\infty,\lambda)} [p\|q] = \int_\mathcal{X} p\ln\frac{p}{\max\{p, q\}} d\mu
\end{align*}
Also, $\alpha$-geodesical skew divergence is a special form of the generalized skew K-divergence~\cite{nielsen2019jensen,e23040464}, which is a family of abstract means-based divergences.
In this paper, the skew K-divergence touched upon in~\cite{nielsen2019jensen} is characterized in terms of $\alpha$-geodesic on positive measures, and its geometric and functional analytic properties are investigated.
When the Kolmogorov-Nagumo average (i.e., when the function $f^{-1}$ in Eq.~\eqref{eq:f_interpolation} is a strictly monotone convex function) the geodesic has been shown to be well-defined~\cite{Eguchi2015}.

\subsection{Symmetrization of \texorpdfstring{$\alpha$}--Geodesical Skew Divergence}
It is easy to symmetrize the $\alpha$-geodesical skew divergence as follows.
\begin{Definition}{(Symmetrized $\alpha$-Geodesical Skew Divergence)}
The symmetrized $\alpha$-geodesical skew divergence $\bar{D}^{(\alpha,\lambda)}_{GS}:\mathcal{P}\times\mathcal{P}\to[0,\infty]$ is defined between two Radon–Nikodym densities $p$ and $q$ of $\mu$-absolutely continuous probability measures by:

\begin{align}
    \bar{D}^{(\alpha,\lambda)}_{GS}[p\|q] &\coloneqq \frac{1}{2}\Biggl(D_{GS}^{(\alpha,\lambda)}[p\|q] + D_{GS}^{(\alpha,\lambda)}[q\|p] \Biggr),
\end{align}

where $\alpha\in\mathbb{R}$ and $\lambda\in[0,1]$.
\end{Definition}
It is seen that $\bar{D}^{(\alpha,\lambda)}_{GS}[p\|q]$ includes several symmetrized divergences.

\begin{align*}
     \bar{D}_{GS}^{(\alpha,1)}[p\|q] &= \frac{1}{2}\Biggl(D_{KL}[p\|q] + D_{KL}[q\|p]\Biggr),\ \ (\text{half of Jeffreys divergence}) \\
    \bar{D}_{GS}^{(-1,\frac{1}{2})}[p\|q] &= \frac{1}{2}\Biggl(D_{KL}\Big[p\|\frac{p+q}{2}\Big] + D_{KL}\Big[q\|\frac{p+q}{2}\Big]\Biggr),\ \ (\text{JS-divergence}) \\
    \bar{D}_{GS}^{(-1,\lambda)}[p\|q] &= \frac{1}{2}\Biggl(D_{KL}\Big[p\|(1-\lambda)p + \lambda q\Big] + D_{KL}\Big[q\|(1-\lambda)q + \lambda p\Big]\Biggr).
\end{align*}

The last one is the $\lambda$-JS-divergence~\cite{nielsen2010family}, which is a generalization of the JS-divergence.

\section{Properties of \texorpdfstring{$\alpha$}--Geodesical Skew Divergence}
In this section, the properties of the $\alpha$-geodesical skew divergence are studied.

\begin{Proposition}{(Non-negativity of the $\alpha$-geodesical skew divergence)}
For $\alpha\geq -1$ and $\lambda\in[0,1]$, the $\alpha$-geodesical skew divergence $D_{GS}^{(\alpha,\lambda)}[p\|q]$ satisfies the following inequality:

\begin{equation}
    D_{GS}^{(\alpha,\lambda)}[p\|q] \geq 0.
\end{equation}

\begin{proof}
When $\lambda$ is fixed, the $f$-interpolation has the following inverse monotonicity with respect to $\alpha$:

\begin{equation}
    m_f^{(\lambda,\alpha)}(p,q) \geq m_f^{(\lambda,\alpha')}(p,q),\ (\alpha \leq \alpha'). \label{eq:f_interpolation_inverse}
\end{equation}

From Gibbs' inequality~\cite{cover1999elements} and Eq.~\eqref{eq:f_interpolation_inverse}, one obtains

\begin{align*}
    D_{GS}^{(\alpha,\lambda)}[p\|q] &= \int_\mathcal{X} p\ln\frac{p}{m_f^{(\alpha,\lambda)}(p,q)} d\mu \nonumber \\
    &\geq \Big(\int_\mathcal{X} p d\mu\Big)\ln\frac{p}{m_f^{(\alpha,\lambda)}(p,q)} \nonumber \\
    &\geq 1\cdot \ln 1 = 0. \nonumber
\end{align*}
\end{proof}
\end{Proposition}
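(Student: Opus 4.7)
The plan is to reduce the general statement to the already-known non-negativity of ordinary KL-divergence by exploiting the monotonicity of the $f$-interpolation in $\alpha$. Concretely, I want to compare $m_f^{(\lambda,\alpha)}(p,q)$ against the arithmetic mean $m_f^{(\lambda,-1)}(p,q) = (1-\lambda)p + \lambda q$, which is the unique value of $\alpha$ in the admissible range $\alpha \ge -1$ for which the mixture is itself a probability density.

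First, I would invoke the pointwise monotonicity stated in Eq.~\eqref{eq:f_interpolation_inverse}: for every $\alpha \ge -1$ and every pair $(p(x),q(x))$,
\[
m_f^{(\lambda,\alpha)}(p,q) \;\le\; m_f^{(\lambda,-1)}(p,q) \;=\; (1-\lambda)p + \lambda q.
\]
This is the classical power-mean inequality in disguise, with exponent $r = (1-\alpha)/2 \le 1$, and it also covers the boundary $\alpha = 1$ (the geometric mean) via the standard AM--GM inequality. Second, since $p \ge 0$ and $\ln$ is increasing, the above pointwise bound implies
\[
p(x)\,\ln\frac{p(x)}{m_f^{(\lambda,\alpha)}(p,q)(x)} \;\ge\; p(x)\,\ln\frac{p(x)}{(1-\lambda)p(x)+\lambda q(x)}
\]
for $\mu$-almost every $x$. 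Integrating against $\mu$ yields
\[
D_{GS}^{(\alpha,\lambda)}[p\|q] \;\ge\; D_{KL}\bigl[\,p\,\bigl\|\,(1-\lambda)p + \lambda q\bigr].
\]
Finally, because $(1-\lambda)p + \lambda q$ is a convex combination of elements of $\mathcal{P}$ and hence a bona fide probability density, the right-hand side is a KL-divergence between two genuine probability measures, and Gibbs' inequality delivers non-negativity.

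The only step that is not purely a chaining of pointwise inequalities is the monotonicity bound itself; inside $(-1,1)$ it is an immediate application of Jensen's inequality to the convex function $f_\alpha^{-1}$, but the boundary $\alpha = 1$ (where $f_\alpha$ degenerates to $\ln$) and the regime $\alpha > 1$ (where $f_\alpha^{-1}$ is concave rather than convex) have to be handled separately, the former by a continuity argument from $\alpha \uparrow 1$ and the latter by reinterpreting the inequality through $f_\alpha$ rather than $f_\alpha^{-1}$. I expect this power-mean monotonicity to be the main, though still routine, obstacle; once it is in hand, the remainder of the argument is a one-line reduction to Gibbs' inequality and does not require the log-sum inequality or any normalization manipulation.
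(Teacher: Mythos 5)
Your proof is correct, and it leans on the same key fact as the paper --- the monotonicity of $m_f^{(\lambda,\alpha)}$ in $\alpha$, i.e.\ Eq.~\eqref{eq:f_interpolation_inverse} --- but it closes the argument differently. You use the power-mean inequality pointwise to dominate $m_f^{(\lambda,\alpha)}(p,q)$ by the arithmetic mixture $(1-\lambda)p+\lambda q$, reduce to $D_{GS}^{(\alpha,\lambda)}[p\|q]\geq D_{KL}\bigl[p\,\|\,(1-\lambda)p+\lambda q\bigr]$, and then apply Gibbs' inequality to two genuine probability densities. The paper instead applies a Gibbs-type bound directly to the (generally unnormalized) mean $m_f^{(\lambda,\alpha)}(p,q)$; its displayed chain, which pulls $\int_\mathcal{X} p\,d\mu$ out front while leaving an $x$-dependent $\ln\frac{p}{m_f^{(\lambda,\alpha)}(p,q)}$ behind, only parses if one reads it as the log-sum inequality $\int p\ln\frac{p}{m}\,d\mu\geq\bigl(\int p\,d\mu\bigr)\ln\frac{\int p\,d\mu}{\int m\,d\mu}$ combined with the mass bound $\int m_f^{(\lambda,\alpha)}(p,q)\,d\mu\leq 1$, which itself follows from the same comparison with the arithmetic mixture that you use. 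So the two arguments rest on the identical underlying inequality $m_f^{(\lambda,\alpha)}(p,q)\leq(1-\lambda)p+\lambda q$ for $\alpha\geq -1$; yours packages it as a pointwise reduction to the ordinary skew divergence (in effect proving the paper's later monotonicity proposition en route and specializing to $\alpha=-1$), which is cleaner and avoids the normalization bookkeeping, while the paper's version is shorter but, as written, elides the step where the logarithm's argument must be integrated. Your flagged concern about the cases $\alpha=1$ and $\alpha>1$ of the power-mean monotonicity is legitimate but routine, exactly as you say.
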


\begin{Proposition}{(Asymmetry of the $\alpha$-geodesical skew divergence)}
$\alpha$-Geodesical skew divergence is not symmetric in general:

\begin{equation}
    D_{GS}^{(\alpha, \lambda)} [p| q] \neq D_{GS}^{(\alpha, \lambda)}[q\| p].
\end{equation}

\end{Proposition}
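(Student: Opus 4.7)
The plan is to establish the claim by exhibiting an explicit counterexample, since the proposition is a negative statement: it suffices to produce a single quadruple $(p,q,\alpha,\lambda)$ on which the two divergences disagree. The most economical route is to exploit the special cases already tabulated immediately after the definition of $D_{GS}^{(\alpha,\lambda)}$. For $\lambda=1$ one has $D_{GS}^{(\alpha,1)}[p\|q]=D_{KL}[p\|q]$ for every $\alpha\in\mathbb{R}$, so any classical witness to the asymmetry of the Kullback--Leibler divergence transfers directly. The same observation applies along the slice $\alpha=1$ via $D_{GS}^{(1,\lambda)}[p\|q]=\lambda D_{KL}[p\|q]$ for $\lambda\in(0,1]$.

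Concretely, I would take $\mathcal{X}=\{x_1,x_2\}$ with counting measure and choose Bernoulli densities $p=(\tfrac{1}{2},\tfrac{1}{2})$ and $q=(\tfrac{1}{4},\tfrac{3}{4})$. A direct computation gives
\begin{equation*}
D_{KL}[p\|q]=\tfrac{1}{2}\ln\tfrac{4}{3},\qquad D_{KL}[q\|p]=\tfrac{3}{4}\ln 3-\ln 2,
\end{equation*}
which are unequal. Specialising to $\lambda=1$ (and any $\alpha$) then yields $D_{GS}^{(\alpha,1)}[p\|q]\neq D_{GS}^{(\alpha,1)}[q\|p]$, which proves the proposition as stated.

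For intuition about why asymmetry persists for generic parameter values, I would note that the $f$-interpolation obeys $m_f^{(\lambda,\alpha)}(p,q)=m_f^{(1-\lambda,\alpha)}(q,p)$, so swapping the arguments in $D_{GS}^{(\alpha,\lambda)}$ simultaneously replaces the outer density $p$ by $q$ and the interpolation weight $\lambda$ by $1-\lambda$; there is no algebraic identity that forces the two integrals to coincide. A more quantitative treatment, should it be desired, could be obtained by expanding $D_{GS}^{(\alpha,\lambda)}[p\|q]-D_{GS}^{(\alpha,\lambda)}[q\|p]$ to leading order about $p=q$ and observing that the expansion is not identically zero, but this is unnecessary for the statement as phrased.

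There is essentially no hard step in this proof: the only care needed is in the quantification. The degenerate case $\lambda=0$ makes both sides vanish and must therefore be excluded from the counterexample, and one should phrase the conclusion as ``there exist $p,q,\alpha,\lambda$ with strict inequality,'' matching the author's reading of ``not symmetric in general.''
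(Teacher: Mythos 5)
Your proposal is correct and follows essentially the same route as the paper: specialize to $\lambda=1$ so that $D_{GS}^{(\alpha,1)}$ reduces to the KL-divergence for every $\alpha$, and then invoke the asymmetry of KL. The only difference is that you make the witness fully explicit with the Bernoulli pair $p=(\tfrac{1}{2},\tfrac{1}{2})$, $q=(\tfrac{1}{4},\tfrac{3}{4})$ (your computed values $\tfrac{1}{2}\ln\tfrac{4}{3}$ and $\tfrac{3}{4}\ln 3-\ln 2$ are correct and unequal), whereas the paper simply appeals to the known asymmetry of KL without exhibiting a concrete pair.
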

\begin{proof}
For example, if $\lambda=1$, then $\forall\alpha\in\mathbb{R}$, it holds that

\begin{align}
    D_{GS}^{(\alpha,1)}[p\|q] -  D_{GS}^{(\alpha,1)}[q\|p] &= D_{KL}[p\|q] - D_{KL}[q\|p], \nonumber
\end{align}

and the asymmetry of the KL-divergence results in an asymmetry of the geodesic skew divergence.
\end{proof}

When a function $f(x)$ of $x\in [0,1]$ satisfies $f(x) = f(1-x)$, it is referred to be centrosymmetric.
\begin{Proposition}{(Non-centrosymmetricicy of the $\alpha$-geodesical skew divergence with respect to $\lambda$)}
$\alpha$-Geodesical skew divergence is not centrosymmetric  in general with respect to the parameter $\lambda \in [0,1]$:

\begin{equation}
    D_{GS}^{(\alpha, \lambda)} [p\|q] \neq D_{GS}^{(\alpha, 1-\lambda)} [p\| q].
\end{equation}

\end{Proposition}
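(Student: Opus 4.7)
The plan is to mirror the strategy used in the preceding asymmetry proposition and exhibit an explicit pair $(p,q,\alpha,\lambda)$ that violates the putative identity $D_{GS}^{(\alpha,\lambda)}[p\|q] = D_{GS}^{(\alpha,1-\lambda)}[p\|q]$. Since we only need to show failure in general, a single counterexample suffices, and the cheapest one lives at the boundary $\lambda = 1$ (whose reflection is $1-\lambda = 0$). The table of special cases stated immediately after the definition of $D_{GS}^{(\alpha,\lambda)}$ already hands us, for every $\alpha\in\mathbb{R}$,
\[
    D_{GS}^{(\alpha,1)}[p\|q] = D_{KL}[p\|q], \qquad D_{GS}^{(\alpha,0)}[p\|q] = 0.
\]

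The second step is to invoke Gibbs' inequality, which guarantees $D_{KL}[p\|q] > 0$ whenever $p\neq q$ in $\mathcal{P}$. Combining these two observations, the left-hand side is strictly positive while the right-hand side vanishes, so centrosymmetry fails at $\lambda = 1$, and the proposition follows.

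If an interior witness $\lambda \in (0,1)$ is preferred for aesthetic reasons, I would additionally argue by continuity: $\lambda \mapsto m_f^{(\lambda,\alpha)}(p,q)$ is continuous in $\lambda$, and hence so is $\lambda \mapsto D_{GS}^{(\alpha,\lambda)}[p\|q]$ (by dominated convergence applied to the defining integrand, assuming the usual integrability hypotheses on $p,q$). Since this function equals $0$ at $\lambda=0$ and $D_{KL}[p\|q]>0$ at $\lambda=1$, it cannot coincide with its reflection on all of $[0,1]$, so some interior $\lambda$ also witnesses non-centrosymmetry. I do not anticipate any serious obstacle: the statement is a non-identity, and the endpoint computations already tabulated in the paper supply the counterexample immediately.
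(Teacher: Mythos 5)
Your proof is correct and follows essentially the same route as the paper's: evaluate at $\lambda=1$ versus $1-\lambda=0$, where the divergence reduces to $D_{KL}[p\|q]$ and $0$ respectively. You are in fact slightly more careful than the paper, which only records $D_{KL}[p\|q]\geq 0$ and leaves the needed strict inequality (valid for $p\neq q$, by Gibbs' inequality) implicit.
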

\begin{proof}
For example, if $\lambda=1$, then $\forall\alpha\in\mathbb{R}$, we have

\begin{align}
    D_{GS}^{(\alpha,\lambda)}[p\|q] - D_{GS}^{(\alpha,1-\lambda)}[p\|q] &=  D_{GS}^{(\alpha,1)}[p\|q] - D_{GS}^{(\alpha,0)}[p\|q] \\
    &= \int_\mathcal{X} p\ln\frac{p}{q} - \int_\mathcal{X} p\ln\frac{p}{p} \nonumber \\
    &= \int_\mathcal{X} p\ln\frac{p}{q} \geq 0. \nonumber
\end{align}
\end{proof}

\begin{figure}[t]
    \centering
    \includegraphics[scale=0.45]{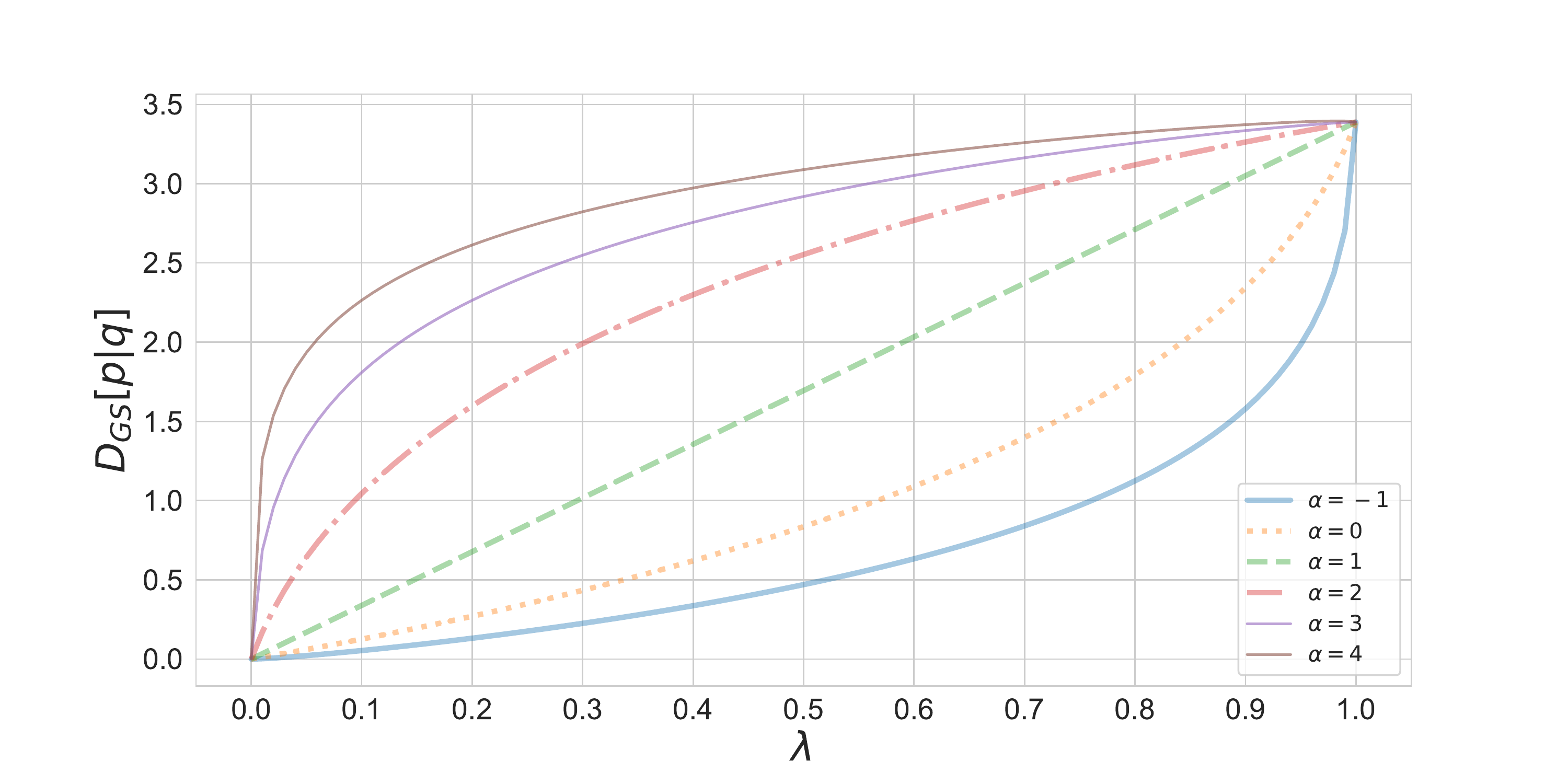}
    \caption{Monotonicity of the $\alpha$-geodesical skew divergence with respect to $\alpha$. The $\alpha$-geodesical skew divergence between the binomial distributions $p=B(10,0.3)$ and $q=B(10,0.7)$ has been calculated.}
    \label{fig:gsd_monotonicity}
\end{figure}

\begin{Proposition}{(Monotonicity of the $\alpha$-geodesical skew divergence with respect to $\alpha$)}
$\alpha$-Geodesical skew divergence satisfies the following inequality for all $\alpha\in\mathbb{R},\lambda\in[0,1]$.

\begin{equation*}
    D_{GS}^{(\alpha,\lambda)}[p\|q] \geq D_{GS}^{(\alpha',\lambda)}[p\|q],\ (\alpha\geq\alpha').
\end{equation*}

\end{Proposition}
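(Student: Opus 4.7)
The plan is to reduce the monotonicity of $D_{GS}^{(\alpha,\lambda)}$ in $\alpha$ to the pointwise inverse monotonicity of the $f$-interpolation already invoked in the non-negativity proof, namely
\begin{equation*}
    m_f^{(\lambda,\alpha)}(p,q) \leq m_f^{(\lambda,\alpha')}(p,q), \quad \text{whenever } \alpha \geq \alpha'.
\end{equation*}
This is the classical monotonicity of the weighted power mean in its exponent $r = (1-\alpha)/2$, which is a decreasing function of $\alpha$, matching the form of the $f$-mean given in Definition~\ref{def:f_interpolation}.

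Given this pointwise estimate, the rest is mechanical. Since $p(x) > 0$, dividing reverses the inequality, giving $p/m_f^{(\lambda,\alpha)}(p,q) \geq p/m_f^{(\lambda,\alpha')}(p,q)$ pointwise. Applying the monotone logarithm yields
\begin{equation*}
    \ln\frac{p}{m_f^{(\lambda,\alpha)}(p,q)} \geq \ln\frac{p}{m_f^{(\lambda,\alpha')}(p,q)}.
\end{equation*}
Multiplying by the non-negative density $p$ and integrating against $\mu$ preserves the inequality and delivers $D_{GS}^{(\alpha,\lambda)}[p\|q] \geq D_{GS}^{(\alpha',\lambda)}[p\|q]$.

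The main obstacle is really only the justification of the $f$-interpolation inverse monotonicity, which the paper cites as Eq.~\eqref{eq:f_interpolation_inverse} without proof. For $\alpha \neq 1$ with $\alpha$ finite this is the standard power-mean inequality; the logarithmic case $\alpha = 1$ and the limiting cases $\alpha = \pm\infty$ (yielding $\min$ and $\max$) can then be handled by continuity in $\alpha$. No further integrability or absolute continuity conditions are needed beyond those already required to make $D_{GS}^{(\alpha,\lambda)}[p\|q]$ well defined.
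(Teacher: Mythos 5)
Your proposal is correct and follows essentially the same route as the paper, which simply asserts the result as ``obvious from the inverse monotonicity of the $f$-interpolation and the monotonicity of the logarithmic function.'' You usefully fill in the two details the paper leaves implicit: the chain reciprocal--logarithm--integration, and the justification of Eq.~\eqref{eq:f_interpolation_inverse} itself via the power-mean inequality in the exponent $r=(1-\alpha)/2$.
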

\begin{proof}
Obvious from the inverse monotonicity of the $f$-interpolation~\eqref{eq:f_interpolation_inverse} and the monotonicity of the logarithmic function.
\end{proof}
Figure~\ref{fig:gsd_monotonicity} shows the monotonicity of the geodesic skew divergence with respect to $\alpha$.
In this figure, divergence is calculated between two binomial distributions.

\begin{Proposition}{(Subadditivity of the $\alpha$-geodesical skew divergence with respect to $\alpha$)}
$\alpha$-Geodesical skew divergence satisfies the following inequality for all $\alpha,\beta\in\mathbb{R},\lambda\in[0,1]$
\end{Proposition}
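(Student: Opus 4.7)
The subadditivity statement as I read it has the form $D_{GS}^{(\alpha+\beta,\lambda)}[p\|q] \leq D_{GS}^{(\alpha,\lambda)}[p\|q] + D_{GS}^{(\beta,\lambda)}[p\|q]$. The plan is to reduce this to a pointwise comparison among the $f$-interpolations at the three indices $\alpha$, $\beta$, $\alpha+\beta$. Writing $D_{GS}^{(\alpha,\lambda)}[p\|q] = -H(p) - \int_{\mathcal{X}} p\ln m_f^{(\lambda,\alpha)}(p,q)\,d\mu$ with $H(p) = -\int p\ln p\,d\mu$, the desired inequality is equivalent to
\begin{equation*}
\int_{\mathcal{X}} p\ln\frac{m_f^{(\lambda,\alpha+\beta)}(p,q)}{m_f^{(\lambda,\alpha)}(p,q)\,m_f^{(\lambda,\beta)}(p,q)}\,d\mu \geq -H(p),
\end{equation*}
so the task becomes a quantitative lower bound on this log-ratio integrated against $p$.

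First I would substitute the closed form $m_f^{(\lambda,\alpha)}(a,b) = \bigl((1-\lambda)a^{(1-\alpha)/2} + \lambda b^{(1-\alpha)/2}\bigr)^{2/(1-\alpha)}$ for $\alpha\neq 1$, treating $\alpha=1$ by continuity as the geometric interpolation $a^{1-\lambda}b^\lambda$. This converts the log-ratio into an expression in power means at exponents $s_1 = (1-\alpha)/2$, $s_2 = (1-\beta)/2$, $s_{12} = (1-\alpha-\beta)/2$. A Jensen/Hölder-type comparison of the power mean at $s_{12}$ against the product of the power means at $s_1$ and $s_2$ then produces a pointwise ratio bound. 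Integrating against $p$ and invoking Gibbs' inequality (exactly as in the non-negativity proposition) absorbs the residual entropy term.

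The main obstacle will be the convexity flip of $f_\alpha^{-1}$ at $\alpha=\pm 1$ together with the sign changes of $1-\alpha$: the direction of the power-mean comparison reverses across these thresholds, forcing a case analysis based on which of $s_1, s_2, s_{12}$ lies in the convex regime $[-1,1]$ versus the concave regime $(-\infty,-1)\cup(1,\infty)$. The boundary cases $\alpha,\beta\in\{-1,1\}$ would be handled by continuity of $\alpha\mapsto m_f^{(\lambda,\alpha)}(a,b)$, and the limits $\alpha,\beta\to\pm\infty$ by direct verification using the $\min$/$\max$ representations listed after Definition~\ref{def:f_interpolation}. In the difficult middle regimes the inverse monotonicity~\eqref{eq:f_interpolation_inverse} provides a coarser but uniformly valid fallback bound, which I expect to be enough to close the argument once the entropy contribution from Gibbs is accounted for.
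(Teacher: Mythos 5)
Your reduction is fine in outline (modulo a sign: with $H(p)=-\int p\ln p\,d\mu$ the condition you need is $\int_{\mathcal{X}} p\ln\bigl(m_f^{(\lambda,\alpha+\beta)}/(m_f^{(\lambda,\alpha)}m_f^{(\lambda,\beta)})\bigr)d\mu\geq H(p)$, not $\geq -H(p)$), but the proposal never actually carries out the key comparison --- it defers to an unspecified ``Jensen/H\"older-type'' bound with a case analysis, and offers inverse monotonicity as a fallback. That fallback cannot close the argument, because in precisely the regime you flag as difficult the statement being proved is false. By Lemma~\ref{lem:gsd_alpha_minus_infty} together with monotonicity in $\alpha$, as $\alpha\to-\infty$ one has $D_{GS}^{(\alpha,\lambda)}[p\|q]\to\int_{\mathcal{X}}p\ln\frac{p}{\max\{p,q\}}d\mu=:-c$, and $c>0$ whenever $q>p$ on a set of positive $p$-measure and $\lambda>0$. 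Taking $\beta=\alpha$ with $\alpha$ very negative, the left-hand side $D_{GS}^{(2\alpha,\lambda)}$ tends to $-c$ while the right-hand side $2D_{GS}^{(\alpha,\lambda)}$ tends to $-2c$, so subadditivity would force $c\leq 0$. A finite instance: $\mathcal{X}=\{1,2\}$, $p=(0.5,0.5)$, $q=(0.9,0.1)$, $\lambda=\tfrac12$, $\alpha=\beta=-19$ gives $D_{GS}^{(-38,\lambda)}\approx-0.26$ but $2D_{GS}^{(-19,\lambda)}\approx-0.45$.

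So the obstacle is not merely bookkeeping around the convexity flip of $f_\alpha^{-1}$: in the concave regime the inequality genuinely reverses. Inverse monotonicity~\eqref{eq:f_interpolation_inverse} only yields $D_{GS}^{(\alpha+\beta,\lambda)}\leq D_{GS}^{(\max(\alpha,\beta),\lambda)}$ when $\min(\alpha,\beta)\leq 0$, and discarding the other term on the right is only legitimate when that term is nonnegative, i.e.\ when $\min(\alpha,\beta)\geq -1$; outside the window $-1\leq\min(\alpha,\beta)\leq 0$ your strategy has no purchase. For what it is worth, the paper's own justification (an appeal to continuity, monotonicity and self-distributivity of the Kolmogorov mean) is a one-line assertion that does not address this regime either, so you should not expect to recover a complete argument from it. The most that can be salvaged along your lines is subadditivity under an explicit restriction on $\alpha$ and $\beta$ (e.g.\ the window above, or after separately establishing nonnegativity and concavity of $\alpha\mapsto D_{GS}^{(\alpha,\lambda)}$ on the relevant range).
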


\begin{equation*}
    D_{GS}^{(\alpha+\beta,\lambda)}[p\|q] \leq D_{GS}^{(\alpha,\lambda)}[p\|q] + D_{GS}^{(\beta,\lambda)}[p\|q].
\end{equation*}

\begin{proof}
For some $\alpha$ and $\lambda$, $m_f^{(\lambda,\alpha)}$ takes the form of the Kolmogorov mean~\cite{kolmogorov1930notion}, and obvious from its continuity, monotonicity and self-distributivity.
\end{proof}

\begin{Proposition}{(Continuity of the $\alpha$-geodesical skew divergence with respect to $\alpha$ and $\lambda$)}
$\alpha$-Geodesical skew divergence has the continuity property.
\end{Proposition}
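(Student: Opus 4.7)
The plan is to establish joint continuity of $(\alpha,\lambda) \mapsto D_{GS}^{(\alpha,\lambda)}[p\|q]$ on the region where the divergence is finite by applying the dominated convergence theorem to the parameter-dependent integrand $p\ln\bigl(p/m_f^{(\lambda,\alpha)}(p,q)\bigr)$. Two ingredients are required: pointwise continuity of the integrand in $(\alpha,\lambda)$ for each fixed $x\in\mathcal{X}$, and a $(\alpha,\lambda)$-uniform integrable envelope.

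For pointwise continuity, the map $(\alpha,y)\mapsto y^{(1-\alpha)/2}$ and its inverse are jointly continuous on $(\mathbb{R}\setminus\{1\})\times(0,\infty)$, and since $\lambda$ enters linearly under $f_\alpha$, the quasi-arithmetic mean $m_f^{(\lambda,\alpha)}(p(x),q(x))$ is jointly continuous there; multiplying by $p$ and composing with $\ln(p/\cdot)$ preserves this property. Continuity across the seam $\alpha=1$ follows from the standard limit
\begin{equation*}
\lim_{\alpha\to 1}\bigl((1-\lambda)a^{(1-\alpha)/2}+\lambda b^{(1-\alpha)/2}\bigr)^{2/(1-\alpha)} = a^{1-\lambda}b^{\lambda} = \exp\bigl\{(1-\lambda)\ln a+\lambda\ln b\bigr\},
\end{equation*}
whose right-hand side is exactly $m_f^{(\lambda,1)}(a,b)$ as listed after Definition~\ref{def:f_interpolation}. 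For the dominating function, a quasi-arithmetic mean with convex weights always lies between its arguments, $\min\{p(x),q(x)\}\le m_f^{(\lambda,\alpha)}(p(x),q(x))\le\max\{p(x),q(x)\}$, which yields the $(\alpha,\lambda)$-free envelope
\begin{equation*}
\biggl|p(x)\ln\frac{p(x)}{m_f^{(\lambda,\alpha)}(p(x),q(x))}\biggr| \le p(x)\,\Bigl|\ln\frac{p(x)}{q(x)}\Bigr|.
\end{equation*}
Wherever this envelope is $\mu$-integrable, dominated convergence transports the pointwise limit through the integral and yields continuity of $D_{GS}^{(\alpha,\lambda)}[p\|q]$ at every such $(\alpha_0,\lambda_0)$.

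The main obstacle lies at the extended endpoints $\alpha=\pm\infty$ and at pairs $(p,q)$ where the envelope $p|\ln(p/q)|$ fails to be integrable. At the endpoints I would compactify via $s=\arctan\alpha$ and use the pointwise limits $m_f^{(\lambda,\pm\infty)}(a,b)=\min\{a,b\},\max\{a,b\}$ recorded after Definition~\ref{def:f_interpolation}, together with monotone convergence. Where the envelope diverges, continuity must be read in the extended-real sense: combining the $\alpha$-monotonicity proved in the preceding proposition with Fatou's lemma pins the one-sided limits together and delivers the desired semicontinuity, which coincides with genuine continuity precisely on the open locus where the divergence is finite.
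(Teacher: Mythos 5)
Your argument is correct in its essentials, but it is a genuinely different (and far more substantive) route than the paper's: the paper's entire proof is the one-line assertion that continuity follows ``from the continuity of the KL-divergence and the Kolmogorov mean,'' which identifies the pointwise ingredient but never addresses the real issue of passing a parameter limit through the integral. Your proof supplies exactly that missing analytic content. The key ingredient you add is the $(\alpha,\lambda)$-free dominating function obtained from the sandwich $\min\{p,q\}\le m_f^{(\lambda,\alpha)}(p,q)\le\max\{p,q\}$ (a consequence of $f_\alpha^{-1}$ applied to a convex combination of $f_\alpha(p)$ and $f_\alpha(q)$), which gives $|p\ln(p/m_f^{(\lambda,\alpha)}(p,q))|\le p\,|\ln(p/q)|$ and lets dominated convergence do the work; this also dovetails with the paper's own Propositions~\ref{prpo:gsd_lower_bound} and~\ref{prpo:gsd_upper_bound} and Lemmas~\ref{lem:gsd_alpha_infty}--\ref{lem:gsd_alpha_minus_infty}. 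Your handling of the seam $\alpha=1$ via the power-mean-to-geometric-mean limit is the standard and correct way to reconcile the two branches of $f_\alpha$ in Definition~\ref{def:f_interpolation}. What your approach buys is an actual theorem with an explicit hypothesis ($p|\ln(p/q)|\in L^1(\mu)$, automatic in the discrete/compactly supported settings the paper computes with); what the paper's phrasing buys is brevity at the cost of rigor. One soft spot: your closing claim that extended-real continuity ``coincides with genuine continuity precisely on the open locus where the divergence is finite'' is imprecise --- the finiteness set need not be open in $(\alpha,\lambda)$, and monotonicity in $\alpha$ plus Fatou only yields one-sided semicontinuity without further work --- so that final sentence should be stated as lower semicontinuity everywhere together with continuity wherever the envelope is integrable, rather than as a characterization.
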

\begin{proof}
We can prove from the continuity of the KL-divergence and the Kolmogorov mean.
\end{proof}
Figure~\ref{fig:gsd_surface} shows the continuity of the geodesic skew divergence with respect to $\alpha$ and $\lambda$.
Both of source and target distributions are binomial distributions. From this figure, it can be seen that the divergence changes smoothly as the parameters change.

\begin{figure}[t]
    \centering
    \includegraphics[scale=0.45]{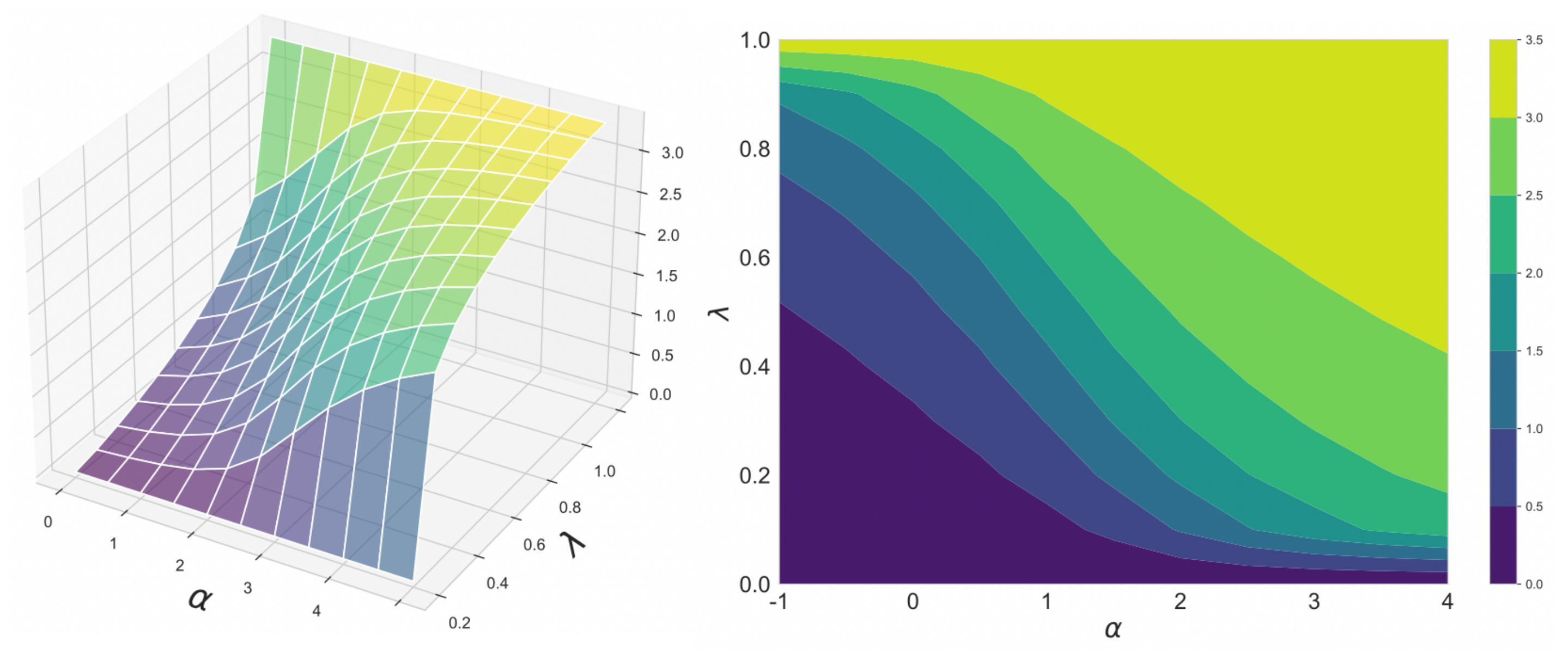}
    \caption{Continuity of the $\alpha$-geodesical skew divergence with respect to $\alpha$ and $\lambda$. The $\alpha$-geodesical skew divergence between the binomial distributions $p=B(10,0.3)$ and $q=B(10,0.7)$ has been calculated.}
    \label{fig:gsd_surface}
\end{figure}

\begin{Lemma}
\label{lem:gsd_alpha_infty}
Suppose $\alpha\to\infty$. Then,

\begin{equation}
    \lim_{\alpha\to\infty}D_{GS}^{(\alpha,\lambda)}[p\|q] = \int_\mathcal{X} p\ln\frac{p}{\min{\{p,q\}}} d\mu
\end{equation}

holds for all $\lambda\in[0,1]$.
\end{Lemma}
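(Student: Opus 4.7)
The plan is to reduce the claim to a pointwise convergence plus a monotone-interchange argument. First, I would identify the pointwise limit of the integrand. By the limiting behavior of $f_\alpha$ recorded after Definition~\ref{def:f_interpolation} (equivalently, by substituting $r=(1-\alpha)/2$ and invoking the classical limit of weighted power means of order $r\to-\infty$), one obtains
\[
\lim_{\alpha\to\infty} m_f^{(\lambda,\alpha)}\bigl(p(x),q(x)\bigr) \;=\; \min\{p(x),q(x)\}
\]
for every $x\in\mathcal{X}$. Since $p,q\in\mathcal{P}$ are strictly positive, there are no boundary issues with $0$ or $\infty$ in the argument of $f_\alpha$, and the integrand
\[
h_\alpha(x) \;:=\; p(x)\ln\frac{p(x)}{m_f^{(\lambda,\alpha)}(p(x),q(x))}
\]
converges pointwise to $p(x)\ln\bigl(p(x)/\min\{p(x),q(x)\}\bigr)$.

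Second, I would use the inverse monotonicity recorded in \eqref{eq:f_interpolation_inverse}: the map $\alpha\mapsto m_f^{(\lambda,\alpha)}(p(x),q(x))$ is non-increasing, so by the monotonicity of $\ln$ together with $p\geq 0$, the map $\alpha\mapsto h_\alpha(x)$ is non-decreasing at every $x$. Fixing any base point $\alpha_0\geq -1$, the non-negativity proposition proved earlier gives $\int_\mathcal{X} h_{\alpha_0}d\mu = D_{GS}^{(\alpha_0,\lambda)}[p\|q]\geq 0$, so $h_{\alpha_0}$ has an integral bounded below. Hence along any increasing sequence $\alpha_n\to\infty$ with $\alpha_n\geq\alpha_0$, the monotone convergence theorem applies in its signed form and yields
\[
\lim_{n\to\infty}\int_\mathcal{X} h_{\alpha_n}\,d\mu \;=\; \int_\mathcal{X} p\ln\frac{p}{\min\{p,q\}}\,d\mu.
\]
Since the monotonicity in $\alpha$ also shows that the limit along sequences agrees with the continuous limit $\alpha\to\infty$, the lemma follows.

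The main obstacle is justifying the interchange of limit and integral cleanly: the integrand is not sign-definite, and the limiting integrand may well be $+\infty$, so naive dominated convergence is unavailable. The monotonicity of the $f$-interpolation in $\alpha$ is precisely what rescues the argument, by turning the problem into a monotone (rather than dominated) convergence statement, and by giving an integrable lower bound at $\alpha_0$. Once these structural facts are in place, the rest of the proof is routine.
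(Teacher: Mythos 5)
Your proposal is correct, and its first half coincides with the paper's own proof: both arguments compute the pointwise limit $\lim_{\alpha\to\infty}m_f^{(\lambda,\alpha)}(p,q)=\min\{p,q\}$ by substituting $u=(1-\alpha)/2$ and invoking the classical limit of weighted power means as the order tends to $-\infty$. Where you genuinely diverge is in the second half: the paper simply moves the limit inside the integral with no comment, whereas you correctly identify this interchange as the only nontrivial step (the integrand is not sign-definite and the limiting integrand may be $+\infty$, so dominated convergence is unavailable) and you resolve it using the inverse monotonicity \eqref{eq:f_interpolation_inverse} together with the signed form of the monotone convergence theorem. This is a real gain in rigor over the published argument. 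One small refinement: the hypothesis needed for signed monotone convergence is not that $\int_\mathcal{X} h_{\alpha_0}\,d\mu\ge 0$ but that the negative part $h_{\alpha_0}^-$ is integrable; this does hold, e.g.\ at $\alpha_0=-1$ one has $h_{-1}^-\le p\ln^+\bigl(((1-\lambda)p+\lambda q)/p\bigr)\le \lambda(q-p)^+\le \lambda q$, which has finite integral since $q\in\mathcal{P}$, so your argument goes through once this is made explicit. Note finally that both your proof and the paper's tacitly require $\lambda\in(0,1)$ for the pointwise limit to equal $\min\{p,q\}$ (at $\lambda=0$ or $\lambda=1$ the $f$-interpolation is identically $p$ or $q$, so the stated limit fails at the endpoints); this caveat is inherited from the lemma as stated and is not a defect of your approach relative to the paper's.
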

\begin{proof}
Let $u = \frac{1-\alpha}{2}$. Then $\lim_{\alpha\to\infty} u = -\infty$.
Assuming $p_0 \leq p_1$, it holds that

\begin{align*}
\lim_{\alpha\to\infty} m_f^{(\lambda,\alpha)}(p_0,p_1) &= \lim_{u\to -\infty}\Biggl((1-\lambda)p^u_0 + \lambda p^u_1\Biggr)^{\frac{1}{u}} \\
&= p_0\lim_{u\to -\infty}\Biggl((1-\lambda) + \lambda \Big(\frac{p_1}{p_0}\Big)^u\Biggr)^{\frac{1}{u}} \\
&= p_0 = \min{\{p_0, p_1\}}.
\end{align*}

Then, the following equality

\begin{align*}
    \lim_{\alpha\to\infty}D_{GS}^{(\alpha,\lambda)}[p\|q] &= \int_\mathcal{X} p\ln \frac{p}{\lim_{\alpha\to\infty} m_f^{(\lambda,\alpha)}(p_0,p_1)}d\mu \\
    &= \int_\mathcal{X} p\ln\frac{p}{\min{\{p,q\}}} d\mu
\end{align*}

holds.
\end{proof}

\begin{Lemma}
\label{lem:gsd_alpha_minus_infty}
Suppose $\alpha\to -\infty$. Then,

\begin{equation}
    \lim_{\alpha\to\infty}D_{GS}^{(\alpha,\lambda)}[p\|q] = \int_\mathcal{X} p\ln \frac{p}{\max{\{p,q\}}}d\mu
\end{equation}

holds for all $\lambda\in[0,1]$.
\end{Lemma}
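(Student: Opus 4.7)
The plan is to mirror the argument of Lemma~\ref{lem:gsd_alpha_infty} verbatim, with the only change being the direction in which the substituted variable goes to infinity. This is appropriate because the $f$-interpolation is symmetric in the way it degenerates to $\min$ vs.\ $\max$ in the two limits, as foreshadowed by the explicit limits of $f_\alpha$ recorded right after Definition~\ref{def:f_interpolation}.

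First I would set $u = \frac{1-\alpha}{2}$, so that $\alpha\to -\infty$ corresponds to $u \to +\infty$, and rewrite
\begin{equation*}
    m_f^{(\lambda,\alpha)}(p_0, p_1) = \bigl((1-\lambda) p_0^{u} + \lambda p_1^{u}\bigr)^{1/u}.
\end{equation*}
Assuming without loss of generality that $p_0 \le p_1$ at the point of evaluation, I would factor out $p_1^u$ to obtain
\begin{equation*}
    m_f^{(\lambda,\alpha)}(p_0, p_1) = p_1 \Bigl( \lambda + (1-\lambda)(p_0/p_1)^{u} \Bigr)^{1/u}.
\end{equation*}
Since $0 \le p_0/p_1 \le 1$, the inner factor $(p_0/p_1)^u$ stays bounded in $[0,1]$, so the bracketed quantity lies in $[\lambda, 1]$ and its $1/u$-th power tends to $1$. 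Hence $m_f^{(\lambda,\alpha)}(p_0,p_1) \to p_1 = \max\{p_0,p_1\}$ pointwise.

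Next I would substitute this pointwise limit into the defining integral,
\begin{equation*}
    \lim_{\alpha\to -\infty} D_{GS}^{(\alpha,\lambda)}[p\|q] = \int_\mathcal{X} p \ln\frac{p}{\max\{p,q\}} \, d\mu,
\end{equation*}
exchanging limit and integral in the same fashion as the preceding lemma. If one wanted to be rigorous about this exchange, the cleanest justification uses the monotonicity asserted in Eq.~\eqref{eq:f_interpolation_inverse}: as $\alpha$ decreases, $m_f^{(\lambda,\alpha)}(p,q)$ increases monotonically toward $\max\{p,q\}$, so $p\ln\bigl(p/m_f^{(\lambda,\alpha)}(p,q)\bigr)$ decreases monotonically to $p\ln\bigl(p/\max\{p,q\}\bigr)$, which is nonpositive and integrable against $p$; monotone convergence (applied to the negative of the integrand plus a suitable offset, e.g.\ comparison with the $\alpha=-1$ case whose integral is finite) then justifies passage to the limit.

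The only real obstacle is this limit-integral exchange, and it is the same obstacle present in Lemma~\ref{lem:gsd_alpha_infty}; since the paper treats it there without further comment, I would follow the same convention here and present the proof at the same level of detail, noting the monotone behavior in $\alpha$ as the underlying reason the exchange is legitimate.
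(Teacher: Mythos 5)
Your proposal matches the paper's own proof essentially verbatim: the same substitution $u = \frac{1-\alpha}{2}$, the same factoring out of the larger value to get the pointwise limit $m_f^{(\lambda,\alpha)}(p_0,p_1)\to\max\{p_0,p_1\}$, and the same substitution of this limit into the integral. Your added remarks on justifying the limit--integral exchange via the monotonicity in Eq.~\eqref{eq:f_interpolation_inverse} go slightly beyond the paper, which performs the exchange without comment, but the argument is otherwise identical.
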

\begin{proof}
Let $u = \frac{1-\alpha}{2}$. Then  $\lim_{\alpha\to -\infty} u = \infty$.
Assuming $p_0 \leq p_1$, it holds that

\begin{align*}
\lim_{\alpha\to\infty} m_f^{(\lambda,\alpha)}(p_0,p_1) &= \lim_{u\to -\infty}\Biggl((1-\lambda)p^u_0 + \lambda p^u_1\Biggr)^{\frac{1}{u}} \\
&= p_1\lim_{u\to -\infty}\Biggl((1-\lambda)\Big(\frac{p_0}{p_1}\Big)^u + \lambda \Biggr)^{\frac{1}{u}} \\
&= p_1 = \max{\{p_0, p_1\}}.
\end{align*}

Then, the following equality

\begin{align*}
    \lim_{\alpha\to -\infty}D_{GS}^{(\alpha,\lambda)}[p\|q] &= \int_\mathcal{X} p\ln \frac{p}{\lim_{\alpha\to -\infty} m_f^{(\lambda,\alpha)}(p_0,p_1)}d\mu \\
    &= \int_\mathcal{X} p\ln \frac{p}{\max{\{p,q\}}} d\mu
\end{align*}

holds.
\end{proof}

\begin{Proposition}{(Lower bound of the $\alpha$-geodesical skew divergence)}
\label{prpo:gsd_lower_bound}
$\alpha$-Geodesical skew divergence satisfies the following inequality for all $\alpha\in\mathbb{R},\lambda\in[0,1]$.

\begin{equation}
    D_{GS}^{(\alpha,\lambda)}[p\|q] \geq \int_\mathcal{X} p\ln \frac{p}{\max\{p, q\}}d\mu.
\end{equation}

\end{Proposition}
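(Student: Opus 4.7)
The plan is to combine the monotonicity of $D_{GS}^{(\alpha,\lambda)}[p\|q]$ in $\alpha$ (already proved) with the $\alpha\to-\infty$ limit computed in Lemma~\ref{lem:gsd_alpha_minus_infty}. Concretely, fix $\alpha\in\mathbb{R}$ and $\lambda\in[0,1]$. By the monotonicity proposition, for every $\alpha'\le\alpha$ we have
\begin{equation*}
D_{GS}^{(\alpha,\lambda)}[p\|q] \ge D_{GS}^{(\alpha',\lambda)}[p\|q].
\end{equation*}
Letting $\alpha'\to-\infty$ and invoking Lemma~\ref{lem:gsd_alpha_minus_infty} gives the claimed lower bound. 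The bound is tight in the sense that equality is approached in the limit $\alpha\to-\infty$.

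The only point that needs a little care is the interchange of limit and integral when passing to the infimum, so I would justify it before citing the lemma. Writing $u=\tfrac{1-\alpha}{2}$ and assuming without loss of generality $p,q>0$ almost everywhere on the support, the integrand
\begin{equation*}
p(x)\ln\frac{p(x)}{m_f^{(\lambda,\alpha)}(p(x),q(x))}
\end{equation*}
converges pointwise to $p(x)\ln\frac{p(x)}{\max\{p(x),q(x)\}}$ as $\alpha\to-\infty$, since $m_f^{(\lambda,\alpha)}(p,q)\to\max\{p,q\}$ pointwise by the same computation used in Lemma~\ref{lem:gsd_alpha_minus_infty}. Moreover, by the inverse monotonicity of the $f$-interpolation~\eqref{eq:f_interpolation_inverse}, the family $\{D_{GS}^{(\alpha',\lambda)}[p\|q]\}_{\alpha'\le\alpha}$ is monotone decreasing in $\alpha'$, so the limit exists (possibly in the extended sense) and equals the right-hand side of the claimed inequality by Lemma~\ref{lem:gsd_alpha_minus_infty}. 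The monotone structure removes any need for a dominated-convergence argument.

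The main obstacle, if any, is purely notational: one must be comfortable that Lemma~\ref{lem:gsd_alpha_minus_infty} is invoked at a finite fixed $\alpha$ as the infimum over $\alpha'\le\alpha$ rather than as a pointwise limit of the original expression. Once this is spelled out, the proof is immediate from the two previously established facts (monotonicity in $\alpha$ and the $\alpha\to-\infty$ limit), and no further analytic work is needed.
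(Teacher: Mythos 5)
Your argument is correct and is essentially the paper's own proof: the paper likewise derives the bound by combining the inverse monotonicity of the $f$-interpolation (equivalently, monotonicity of $D_{GS}^{(\alpha,\lambda)}$ in $\alpha$) with the $\alpha\to-\infty$ limit of Lemma~\ref{lem:gsd_alpha_minus_infty}. Your additional remarks on the pointwise convergence and the monotone passage to the limit only make explicit what the paper leaves implicit.
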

\begin{proof}
It follows from the definition of the inverse monotonicity of $f$-interpolation~\eqref{eq:f_interpolation_inverse} and Lemma~\ref{lem:gsd_alpha_minus_infty}.
\end{proof}

\begin{Proposition}{(Upper bound of the $\alpha$-geodesical skew divergence)}
\label{prpo:gsd_upper_bound}
$\alpha$-Geodesical skew divergence satisfies the following inequality for all $\alpha\in\mathbb{R},\lambda\in[0,1]$.

\begin{equation}
    D_{GS}^{(\alpha,\lambda)}[p\|q] \leq \int_\mathcal{X} p\ln\frac{p}{\min\{p, q\}}d\mu.
\end{equation}

\end{Proposition}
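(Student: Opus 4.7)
The plan is to mirror the proof of Proposition~\ref{prpo:gsd_lower_bound}, using the inverse monotonicity \eqref{eq:f_interpolation_inverse} together with the limiting identification of the $f$-interpolation established in Lemma~\ref{lem:gsd_alpha_infty}.

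First, I would fix $\alpha\in\mathbb{R}$ and $\lambda\in[0,1]$, and apply \eqref{eq:f_interpolation_inverse} to any $\alpha'\geq\alpha$ to obtain the pointwise inequality
\begin{equation*}
    m_f^{(\lambda,\alpha)}(p,q) \;\geq\; m_f^{(\lambda,\alpha')}(p,q).
\end{equation*}
Letting $\alpha'\to\infty$ and invoking Lemma~\ref{lem:gsd_alpha_infty} (applied at the level of the integrand before the outer integration, which is the content of the computation carried out in its proof), I would conclude the pointwise bound
\begin{equation*}
    m_f^{(\lambda,\alpha)}(p,q) \;\geq\; \min\{p,q\}
\end{equation*}
valid $\mu$-almost everywhere on $\mathcal{X}$.

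Next, since $p\geq 0$ and the logarithm is monotone increasing on $(0,\infty)$, dividing $p$ by the above inequality and taking $\ln$ yields
\begin{equation*}
    \ln\frac{p}{m_f^{(\lambda,\alpha)}(p,q)} \;\leq\; \ln\frac{p}{\min\{p,q\}}.
\end{equation*}
Multiplying both sides by the nonnegative density $p$ and integrating with respect to $\mu$ preserves the inequality and gives exactly
\begin{equation*}
    D_{GS}^{(\alpha,\lambda)}[p\|q] \;=\; \int_{\mathcal{X}} p\ln\frac{p}{m_f^{(\lambda,\alpha)}(p,q)}\,d\mu \;\leq\; \int_{\mathcal{X}} p\ln\frac{p}{\min\{p,q\}}\,d\mu,
\end{equation*}
which is the desired upper bound.

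The argument is essentially a two-line deduction, so there is no substantial obstacle; the only delicate point is the passage to the limit $\alpha'\to\infty$ used to identify the supremum of $m_f^{(\lambda,\alpha')}(p,q)$ with $\min\{p,q\}$. This is handled exactly as in Lemma~\ref{lem:gsd_alpha_infty} by rewriting the $f$-interpolation as a power mean and taking the limit of its exponent, so for the present proposition it suffices to cite that lemma together with \eqref{eq:f_interpolation_inverse}.
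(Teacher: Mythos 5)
Your argument is correct and is exactly the route the paper takes: its own proof is the one-line citation of the inverse monotonicity \eqref{eq:f_interpolation_inverse} together with Lemma~\ref{lem:gsd_alpha_infty}, and you have simply filled in the details (pointwise bound $m_f^{(\lambda,\alpha)}(p,q)\geq\min\{p,q\}$ from the decreasing limit, then monotonicity of $\ln$ and integration against $p\geq 0$). No gaps.
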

\begin{proof}
It follows from the definition of the  $f$-interpolation~\eqref{eq:f_interpolation_inverse} and Lemma~\ref{lem:gsd_alpha_infty}.
\end{proof}

\begin{Theorem}{(Strong convexity of the $\alpha$-geodesical skew divergence)}
$\alpha$-Geodesical skew divergence $D_{GS}^{(\alpha,\lambda)}[p\|q]$ is strongly convex in $p$ with respect to the total variation norm.
\end{Theorem}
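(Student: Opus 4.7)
The strategy is to write
$$D_{GS}^{(\alpha,\lambda)}[p\|q] \;=\; \int_{\mathcal{X}} p \ln p \, d\mu \;-\; \int_{\mathcal{X}} p \ln m_f^{(\lambda,\alpha)}(p,q) \, d\mu,$$
and combine the $1$-strong convexity of the negative Shannon entropy (which is Pinsker's inequality in disguise) with a pointwise second-derivative estimate on the cross term. The negentropy contribution has pointwise second derivative $1/p$, and via the Cauchy--Schwarz inequality $\bigl(\int|h|\,d\mu\bigr)^{2}\le \int h^{2}/p\,d\mu$ for mean-zero $h$, this lifts to $1$-strong convexity in $\|\cdot\|_{TV}$ (equivalently $\tfrac14$-strong convexity in $\|\cdot\|_{1}$).

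Next I would compute the pointwise second variation of $\Phi(p) \coloneqq -\int p \ln m_f^{(\lambda,\alpha)}(p,q)\,d\mu$. For $\alpha \neq 1$, writing $m = \bigl((1-\lambda)p^{u} + \lambda q^{u}\bigr)^{1/u}$ with $u = (1-\alpha)/2$ and differentiating twice under the integral, algebraic simplification yields that the total pointwise second derivative of $D_{GS}^{(\alpha,\lambda)}[p\|q]$ equals
$$\frac{\lambda\, q^{u}}{p\bigl((1-\lambda)p^{u} + \lambda q^{u}\bigr)^{2}}\Bigl[\tfrac{1+\alpha}{2}(1-\lambda)\, p^{u} + \lambda\, q^{u}\Bigr],$$
which is strictly positive for $\alpha \ge -1$ and collapses at $\alpha=1$ to $\lambda/p$, consistent with the identity $D_{GS}^{(1,\lambda)} = \lambda D_{KL}[p\|q]$ which is manifestly $\lambda$-strongly convex by Pinsker. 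The case $\alpha=1$ itself is handled separately using $m = p^{1-\lambda}q^{\lambda}$, and $|\alpha|\to\infty$ is covered by Lemmas~\ref{lem:gsd_alpha_infty}--\ref{lem:gsd_alpha_minus_infty}. Applying the same Cauchy--Schwarz transfer used for the negentropy then converts this pointwise bound into the global inequality
$$D_{GS}^{(\alpha,\lambda)}\bigl[(1-t)p_{0} + t p_{1}\,\big\|\,q\bigr] \;\leq\; (1-t) D_{GS}^{(\alpha,\lambda)}[p_{0}\|q] + t\, D_{GS}^{(\alpha,\lambda)}[p_{1}\|q] - \tfrac{c_{\alpha,\lambda}}{2}\,t(1-t)\,\|p_{0}-p_{1}\|_{TV}^{2}.$$

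The main obstacle lies in extracting a positive constant $c_{\alpha,\lambda}$: the pointwise expression above is strictly positive but is not uniformly bounded below by a multiple of $1/p$, since it degenerates as $q\to 0$ and as $\alpha \to -1$ (where the $p^{u}$ contribution in the bracket vanishes). The Cauchy--Schwarz transfer therefore has to be applied with a $q$-weighted density rather than with the naive $p$-weighted one that works for plain negentropy, and the resulting constant $c_{\alpha,\lambda}$ depends nontrivially on $\alpha$ and $\lambda$, necessarily vanishing at $\lambda = 0$ since $D_{GS}^{(\alpha,0)}\equiv 0$. This is the step where the bulk of the technical work concentrates, and it is also where one sees that the natural regime of validity of the theorem is $\alpha \geq -1$, matching the range appearing in the earlier non-negativity proposition.
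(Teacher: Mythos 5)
Your pointwise computation is correct --- with $u=(1-\alpha)/2$ the second variation of the integrand is indeed $\frac{\lambda q^{u}}{p\,((1-\lambda)p^{u}+\lambda q^{u})^{2}}\bigl[\frac{1+\alpha}{2}(1-\lambda)p^{u}+\lambda q^{u}\bigr]$ --- and your route is genuinely different from the paper's in that you differentiate the full functional, tracking how $m_f^{(\lambda,\alpha)}(p,q)$ moves with $p$. But the proposal is not a proof. The entire content of the theorem is the quantitative step you defer to your final paragraph: producing a constant $c_{\alpha,\lambda}>0$ and a weight for which the Cauchy--Schwarz transfer yields $\Delta\geq\frac{c_{\alpha,\lambda}}{2}\,t(1-t)\,\|p_0-p_1\|_{TV}^{2}$. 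You explicitly do not carry this out, and you correctly observe why it is hard: the pointwise Hessian is not bounded below by any multiple of $1/p$ (it degenerates when $q^{u}\ll p^{u}$ and as $\alpha\to-1$), so the naive $p$-weighted Cauchy--Schwarz that works for negentropy fails. Identifying the obstacle is not overcoming it; as written you have established convexity for $\alpha\geq-1$ (positivity of the second variation) but not strong convexity in total variation, which is what the theorem asserts.

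For comparison, the paper takes a much shorter route: it sets $r:=m_f^{(\lambda,\alpha)}(p,q)$, defines $H(p)=\int p\ln(p/r)\,d\mu$, and runs the classical Taylor-expansion-plus-Cauchy--Schwarz argument showing that $p\mapsto D_{KL}[p\|r]$ is $1$-strongly convex in total variation for a \emph{fixed} reference $r$. That is, the paper freezes the second argument of the KL along the segment from $p_0$ to $p_1$ and never confronts the dependence of $r$ on $p$ --- precisely the cross term your calculation shows can cancel part of the $1/p$ coming from the negentropy. So your decomposition engages with the actual functional where the paper's argument does not, but it stops short of the stated conclusion. To close the gap you would need a concrete lower bound on the $q$-weighted integral $\int (f_1-f_0)^{2}\,\frac{\lambda^{2}q^{2u}}{s^{2}}\,\frac{dr}{p}$ in terms of $\|p_1-p_0\|_{TV}^{2}$, together with an argument that the resulting constant stays bounded away from zero on the claimed parameter range; without that, the strong convexity claim remains unproven in your framework.
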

\begin{proof}
Let $r\coloneqq m_f^{(\alpha,\lambda)}(p,q)$ and $f_j\coloneqq\frac{p_j}{r}\ (j=0,1)$, so that $f_t = \frac{p_t}{r}\ (t\in(0,1))$.
From Taylor's theorem, for $g(x)\coloneqq x\ln x$ and $j=0,1$, it holds that

\begin{align}
    g(f_j) = g(f_t) + g'(f_t)(f_j - f_t) + (f_j - f_t)^2\int^1_0 g^{''}((1-s)f_t + s f_j)(1-s) ds. \nonumber
\end{align}

Let

\begin{align*}
    \delta &\coloneqq (1-t)g(f_0) + t g(f_1) - g(f_t) \\
    &= (1-t)t(f_1-f_0)^2\int^1_0\Biggl(\frac{t}{(1-s)f_t + s f_0} + \frac{1-t}{(1-s)f_t + s f_1} \Biggr)(1-s)ds \\
    &= (1-t)t(f_1-f_0)^2\int^1_0 \Biggl(\frac{t}{f_{u_0}(t,s)} + \frac{1-t}{f_{u_1}(t,s)}\Biggr)(1-s)ds,
\end{align*}

where

\begin{align*}
    u_j(t,s) &\coloneqq (1-s)t + j t, \\
    f_{\mu_j}(t,s) &\coloneqq (1-s)f_t + s f_j.
\end{align*}

Then,

\begin{align*}
    \Delta &\coloneqq (1-t)H(p_0) + tH(p_1) - H(p_t) \\
    &= \int\delta dr \\
    &= (1-t)t\int^1_0 (1-s)ds\Big[t I(u_0(t,s)) + (1-t)I(u_1(t,s))\Big],
\end{align*}

where

\begin{align*}
    \|p_1 - p_0\| &\coloneqq \int |dp_1 - dp_0|d\mu, \\
    H(p) &\coloneqq D_{GS}^{(\alpha,\lambda)}[p\|r] = \int p\ln\frac{p}{r}d\mu, \\
    I(u) &\coloneqq \int \frac{(f_1-f_0)^2}{f_u} dr.
\end{align*}

Now, it is suffice to prove that $\Delta \geq \frac{t(1-t)}{2}\|p_1 - p_0 \|^2$.
For all $u\in(0,1)$, it is seen that $p_1$ is absolutely continuous with respect to $p_u$.
Let $g_u\coloneqq \frac{p_1}{p_u} = \frac{f_1}{f_u}$. One obtains

\begin{align*}
    I(u) &= \frac{1}{(1-u)^2}\int\frac{(f_1-f_u)^2}{f_u}dr \\
    &= \frac{1}{(1-u)^2}\int (g_u-1)^2 dp_u \\
    &\geq \frac{1}{(1-u)^2}\Biggl(\int |g_u - 1|dp_u \Biggr)^2 \\
    &= \frac{1}{(1-u)^2}\|p_1 - p_u\|^2 = \|p_1 - p_0\|^2,
\end{align*}

and hence, for $j=0,1$,

\begin{equation*}
    \Delta \geq \frac{t(1-t)}{2}\|p_1 - p_0 \|^2.
\end{equation*}
\end{proof}

\section{Natural \texorpdfstring{$\alpha$}--Geodesical Skew Divergence for Exponential Family}
In this section, the exponential  family is considered in which probability density function is given by

\begin{equation}
    p(\bm{x};\bm{\theta}) = \exp\Big\{\bm{\theta}\cdot\bm{x} +k(\bm{x}) - \psi(\bm{\theta})\Big\},
\end{equation}

where $\bm{x}$ is a random variable. In the above equation, $\bm{\theta}=(\theta^1,\dots,\theta^n)$ is an $n$-dimensional vector parameter to specify distribution, $k(\bm{x})$ is a function of $\bm{x}$ and $\psi$ corresponds to the normalization factor.

In skew divergence, the probability distribution of the target is a weighted average of the two distributions.
This implicitly assumes that interpolation of the two probability distributions is properly given by linear interpolation.
Here, in the exponential family, the  interpolation between natural parameters rather than interpolation between probability distributions themselves is considered. Namely, the geodesic connecting two distributions $p(\bm{x};\bm{\theta}_p)$ and $q(\bm{x}; \bm{\theta}_q)$ on the $\bm{\theta}$-coordinate system is considered:

\begin{equation}
    \bm{\theta}(\lambda) = (1-\lambda)\bm{\theta}_p + \lambda\bm{\theta}_q,
\end{equation}

where $\lambda\in[0,1]$ is the parameter.
The probability distributions on the geodesic $\bm{\theta}(\lambda)$ are

\begin{align}
    p(\bm{x};\lambda) &= p(\bm{x}; \bm{\theta}(\lambda)) \nonumber \\
    &= \exp\Big\{\lambda(\bm{\theta}_q - \bm{\theta}_p)\cdot\bm{x} + \bm{\theta}_p\cdot \bm{x} - \psi(\lambda)\Big\}.
\end{align}

Hence, a geodesic itself is a one-dimensional exponential family, where $\lambda$ is the natural parameter.
A geodesic consists of a linear interpolation of the two distributions in the logarithmic scale because

\begin{equation}
    \ln p(\bm{x};\lambda) = (1-\lambda)\ln{p(\bm{x};\bm{\theta}_p)} + \lambda \ln p(\bm{x};\bm{\theta}_q) - \psi(\lambda).
\end{equation}

This corresponds to the case $\alpha=1$ on the $f$-interpolation with normalization factor $c(\lambda)=\exp{\{-\psi(\lambda)\}}$,

\begin{equation}
    p(\bm{x};\bm{\theta}(\lambda)) = m_f^{(\lambda, 1)}(p(\bm{x};\bm{\theta}_p), p(\bm{x};\bm{\theta}_q)).
\end{equation}

This induces the natural geodesic skew divergence with $\alpha=1$ as

\begin{align*}
    D_{GS}^{(1,\lambda)}[p \| q] &= \int_\mathcal{X} p\ln \Biggr(\frac{p}{m_f^{(\lambda, 1)}(p,q)}\Biggl) d\mu \\
    &= \int_\mathcal{X} p\ln p - p\ln \Big(m_f^{(\lambda,1)}(p,q)\Big) d\mu \\
    &= \int_\mathcal{X} p\ln p - p\ln \Big(\exp\{(1-\lambda)\ln p + \lambda\ln q\}\Big) d\mu \\
    &= \int_\mathcal{X} \Big( p\ln p - (1-\lambda)p\ln p - \lambda p\ln q \Big) d\mu \\
    &= \int_\mathcal{X} \Big(\lambda p\ln p - \lambda p\ln q \Big) d\mu \\
    &= \lambda\int_\mathcal{X} p\ln\frac{p}{q} d\mu \\
    &= \lambda D_{KL}[p\|q],
\end{align*}

and this is equal to the scaled KL divergence.

\begin{figure}
    \centering
    \includegraphics[scale=0.4]{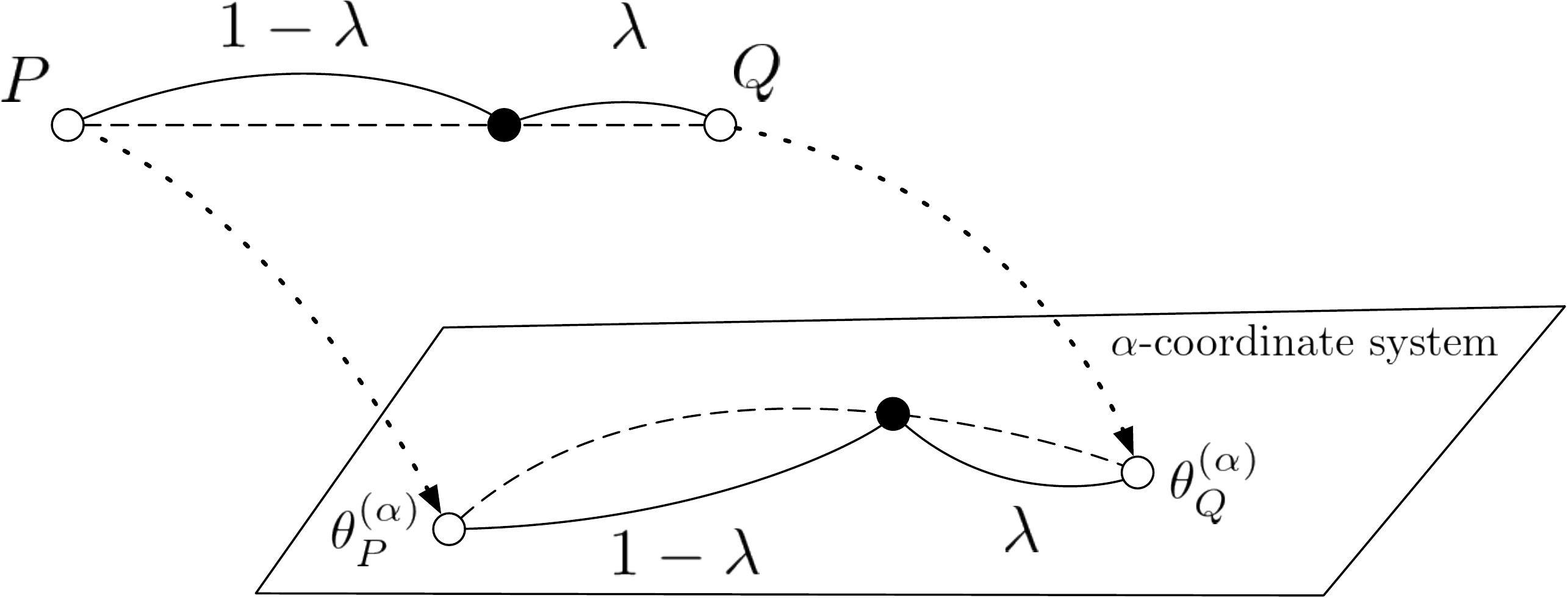}
    \caption{The geodesic between two probability distributions on the $\alpha$-coordinate system.}
    \label{fig:alpha_gsd}
\end{figure}

More generally, let $\theta_P^{(\alpha)}$ and $\theta_Q^{(\alpha)}$ be the parameter representations on the $\alpha$-coordinate system of probability distributions $P$ and $Q$. Then, the geodesics between them are represented as in Figure~\ref{fig:alpha_gsd} and it induces the $\alpha$-geodesical skew divergence.

\section{Function Space associated with the  \texorpdfstring{$\alpha$}--Geodesical Skew Divergence}

To discuss the functional nature of the $\alpha$-geodesical skew divergence in more depth, the function space it constitutes is considered.
For an $\alpha$-geodesical skew divergence $f^{(\alpha,\lambda)}_{q}(p) = D_{GS}^{(\alpha,\lambda)}[p\|q]$ with one side of the distribution fixed, let the entire set be

\begin{equation}
    \mathcal{F}_{q} = \Big\{f^{(\alpha,\lambda)}_{q}\mid\alpha\in\mathbb{R}, \lambda\in[0,1] \Big\}.
\end{equation}

For $f^{(\alpha,\lambda)}_{q} \in \mathcal{F}_q$, its semi-norm is defined by
\begin{equation}
    \Big\|f^{(\alpha,\lambda)}_{q}\Big\|_p\coloneqq \int_\mathcal{X}\Big(\Big|f^{(\alpha,\lambda)}_{q}\Big|^p d\mu\Big)^{\frac{1}{p}}. \label{eq:lp_norm}
\end{equation}
By defining addition and scalar multiplication for $f_q^{(\alpha,\lambda)},g_q^{(\alpha,\lambda)}\in\mathcal{F}_q$, $c\in\mathbb{R}$ as follows, $\mathcal{F}_q$ becomes a semi-norm vector space:

\begin{align}
   (f_q^{(\alpha,\lambda)} + g_q^{(\alpha,\lambda)})(u) &\coloneqq f_q^{(\alpha,\lambda)}(u) + g_q^{(\alpha,\lambda)}(u) = D_{GS}^{(\alpha,\lambda)}[u\|q] + D_{GS}^{(\alpha',\lambda')}[u\|q], \\
   (cf)(u) &\coloneqq cf_q^{(\alpha,\lambda)}(u) = c\cdot D_{GS}^{(\alpha,\lambda)}[u\|q].
\end{align}

\begin{Theorem}
Let $\mathcal{N}$ be the kernel of $\|\cdot\|_p$ as follows:

\begin{equation}
    \mathcal{N} \coloneqq ker(\|\cdot \|_p) = \Big\{f_q^{(\alpha,\lambda)}\mid f_q^{(\alpha,\lambda)} = 0 \Big\}.
\end{equation}

Then the quotient space $\mathcal{V}\coloneqq(\mathcal{F}_{q},\|\cdot\|_p)/N$ is a Banach space.
\end{Theorem}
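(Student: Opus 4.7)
The plan is to establish that $\mathcal{V}$ is a Banach space in three stages: (i) verify that $(\mathcal{F}_q, \|\cdot\|_p)$ is a semi-normed real vector space, (ii) check that the quotient norm on $\mathcal{V}$ is well-defined, and (iii) prove completeness via a Riesz--Fischer style argument for absolutely convergent series.

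For stages (i) and (ii) I would note that the vector-space axioms for $\mathcal{F}_q$ under the pointwise operations introduced just before the theorem are inherited from $L^p(\mathcal{X},\mu)$. Absolute homogeneity of $\|\cdot\|_p$ is immediate from (\ref{eq:lp_norm}), and its triangle inequality is Minkowski's inequality. Hence $\mathcal{N} = \ker\|\cdot\|_p$ is a linear subspace (a standard consequence of the two semi-norm axioms), so the quotient $\mathcal{V} = \mathcal{F}_q/\mathcal{N}$ is a real vector space and the assignment $\|[f]\|_p \coloneqq \|f\|_p$ descends to an honest norm on $\mathcal{V}$: representative-independence follows from the triangle inequality, and positive-definiteness holds because the kernel has been quotiented out.

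For stage (iii) I would run the classical Cauchy-subsequence argument. Given a Cauchy sequence $\{[f_n]\}$ in $\mathcal{V}$, I would pick representatives and extract a subsequence $(f_{n_k})$ with $\|f_{n_{k+1}} - f_{n_k}\|_p \leq 2^{-k}$. By Minkowski's inequality the nonnegative partial sums $g_K \coloneqq \sum_{k=1}^{K}|f_{n_{k+1}} - f_{n_k}|$ are uniformly bounded in $L^p$-norm by $\|f_{n_1}\|_p + 1$, so the monotone convergence theorem yields a $\mu$-a.e. finite limit $g_\infty \in L^p(\mathcal{X},\mu)$. The telescoping series $f_{n_1} + \sum_k (f_{n_{k+1}} - f_{n_k})$ therefore converges $\mu$-a.e. to some measurable $f$, and dominated convergence with dominating function $|f_{n_1}| + g_\infty$ gives $\|f_{n_k} - f\|_p \to 0$. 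Combining with the Cauchy property then produces $[f_n] \to [f]$ in $\mathcal{V}$.

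The main obstacle is showing that the limit $[f]$ actually lives in $\mathcal{V}$ rather than merely in the larger ambient space $L^p(\mathcal{X},\mu)/\mathcal{N}_{L^p}$. Since $\mathcal{F}_q = \{f_q^{(\alpha,\lambda)} : \alpha\in\mathbb{R},\lambda\in[0,1]\}$ is literally only a parametrized family, the vector-space operations stated before the theorem force us to interpret $\mathcal{F}_q$ as its own linear span; the task then reduces to verifying that this span is closed in $L^p$. I would attack this by leveraging the continuity of $(\alpha,\lambda)\mapsto f_q^{(\alpha,\lambda)}$ established earlier, together with Lemmas~\ref{lem:gsd_alpha_infty} and~\ref{lem:gsd_alpha_minus_infty}, which compactify the parameter range to $[-\infty,\infty]\times[0,1]$ and thereby control the behaviour of limits. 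This closedness step is the genuinely delicate part of the proof, since the remainder of the argument is a direct transposition of the Riesz--Fischer theorem to the quotient $\mathcal{V}$.
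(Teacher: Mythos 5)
Your completeness argument is the same Riesz--Fischer scheme the paper uses: extract a rapidly Cauchy subsequence with $\|f_{n_{k+1}}-f_{n_k}\|_p\leq 2^{-k}$, bound the telescoping partial sums in $\|\cdot\|_p$ by Minkowski, obtain an a.e.\ limit by monotone convergence, and upgrade to norm convergence by dominated convergence. Your stages (i) and (ii) (semi-norm axioms, well-definedness of the quotient norm) are standard bookkeeping that the paper omits, and your use of the upper-bound proposition to obtain $p$-power integrability matches the paper's opening step. So the core of your proposal coincides with the paper's own proof.

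The point of divergence is the ``main obstacle'' you flag, and there you have put your finger on a real soft spot: the paper simply asserts that the a.e.\ limit belongs to $\mathcal{V}$, with no argument that (the span of) the two-parameter family $\mathcal{F}_q$ is closed in $L^p(\mathcal{X},\mu)$. Identifying this is to your credit. However, your proposed repair does not close the gap: compactifying the parameter domain to $[-\infty,+\infty]\times[0,1]$ and invoking continuity of $(\alpha,\lambda)\mapsto f_q^{(\alpha,\lambda)}$ would at best show that the image $\mathcal{F}_q$ itself is a compact, hence closed, subset of $L^p$; it says nothing about the closedness of its linear span, which is where your Cauchy sequences actually live once the vector-space operations are taken seriously, and the linear span of a compact set is in general not closed. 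Nor would it by itself show that the a.e.\ limit is again of the form $\sum_i c_i f_q^{(\alpha_i,\lambda_i)}$. As written, this step of your proposal is a correctly diagnosed but unresolved gap --- one that the paper's own proof shares.
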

\begin{proof}
It is sufficient to prove that $f^{(\alpha,\lambda)}_{q}$ is integrable to the power of $p$ and that $\mathcal{V}$ is complete.
From Proposition~\ref{prpo:gsd_upper_bound}, the $\alpha$-geodesical skew divergence is bounded from above for all $\alpha\in\mathbb{R}$ and $\lambda\in[0,1]$.
Since $f_q^{(\alpha,\lambda)}$ is continuous, we know that it is $p$-power integrable.

Let $\{f_n\}$ be a Cauchy sequence of $\mathcal{V}$:

\begin{equation*}
    \lim_{n,m\to\infty}\|f_n - f_m\|_p = 0.
\end{equation*}

Since $n(k),\ k=1,2,\dots,$ can be taken to be monotonically increasing and

\begin{equation*}
    \|f_n - f_{n(k)}\|_p < 2^{-k}
\end{equation*}

with respect to $n>n(k)$, let

\begin{equation*}
    \|f_{n(k+1)} - f_{n(k)}\|_p < 2^{-k}.
\end{equation*}

If $g_n = |f_{n(1)}| + \sum^{n-1}_{j=1}|f_{n(j+1)}-f_{n(j)}|\in\mathcal{V}$, it is non-negatively monotonically increasing at each point, and from the subadditivity of the norm, $\|g_n\|_p\leq\|f_{n(1)}\|_p + \sum^{n-1}_{j=1}2^{-j}$.
From the monotonic convergence theorem, we have

\begin{equation*}
    \Big\|\lim_{n\to\infty} g_n\Big\|_p = \lim_{n\to\infty}\|g_n\|_p \leq \|f_{n(1)}\|_p + 1 < \infty.
\end{equation*}

That is, $\lim_{n\to\infty}g_n$ exists almost everywhere, and $\lim_{n\to\infty}g_n\in\mathcal{V}$.
From $\lim_{n\to\infty}g_n <\infty$, we have

\begin{equation*}
    f_{n(1)} + \sum^{n-1}_{j=1}(f_{n(j+1)}-f_{n(j)}) = \lim_{n\to\infty} f_{n(1)}
\end{equation*}

converges absolutely almost everywhere to $|\lim_{n\to\infty} f_{n(n)}|\leq \lim_{n\to\infty}g_n, a.e.$.
That is, $\lim_{n\to\infty} f_{n(n)}\in\mathcal{V}$.
Then

\begin{equation*}
    \Big|\lim_{n\to\infty}f_n - f_{n(n)} \Big| \leq \lim_{n\to\infty}g_n
\end{equation*}

and from the superior convergence theorem, we can obtain

\begin{equation*}
    \lim_{n\to\infty}\Big\|\lim_{n\to\infty}f_n - f_{n(n)} \Big\|_p = 0
\end{equation*}

We have now confirmed the completeness of $\mathcal{V}$.
\end{proof}

\begin{figure}[t]
    \centering
    \includegraphics[scale=0.38]{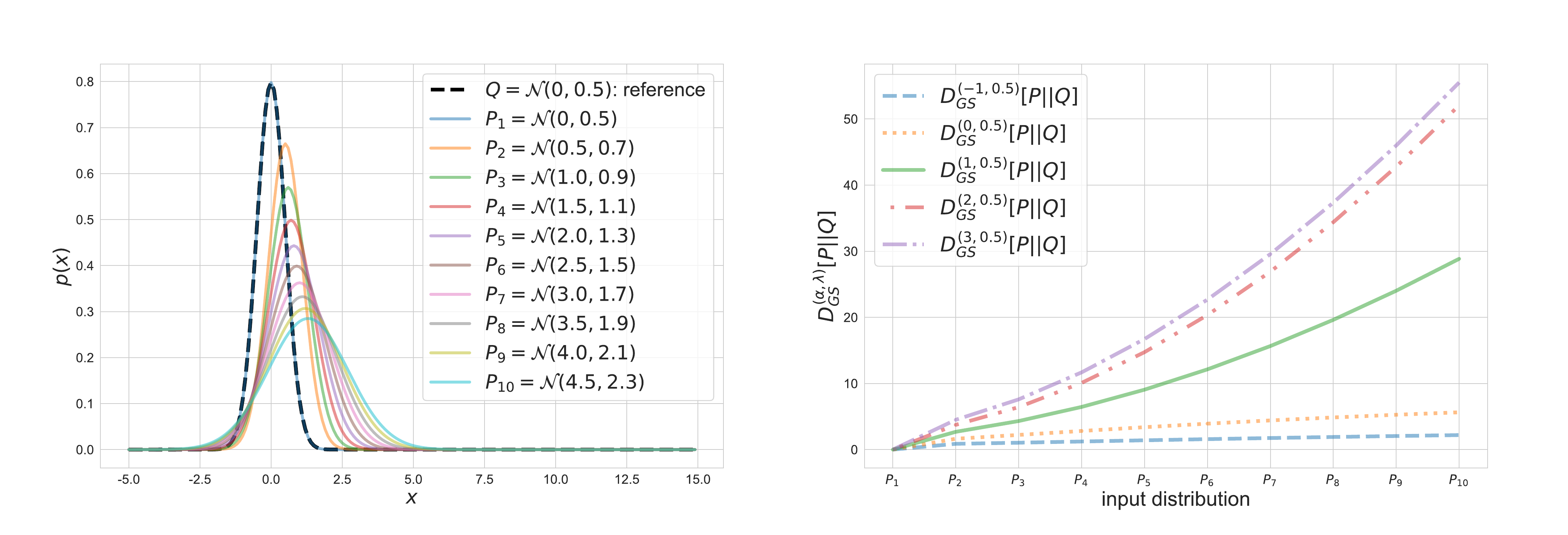}
    \caption{$\alpha$-geodesical skew divergence between two normal distributions. The reference distribution is $Q=\mathcal{N}(0, 0.5)$.
    For $P_1,P_2,\dots,P_j,\ (j=1,2,\dots,10)$, let their mean and variance be $\mu_j$ and $\sigma^2_j$, respectively, where $\mu_{j+1} - \mu_j = 0.5$ and $\sigma^2_{j+1} - \sigma^2_j = 0.2$.}
    \label{fig:gsd_gaussian}
\end{figure}

\begin{figure}[t]
    \centering
    \includegraphics[scale=0.4]{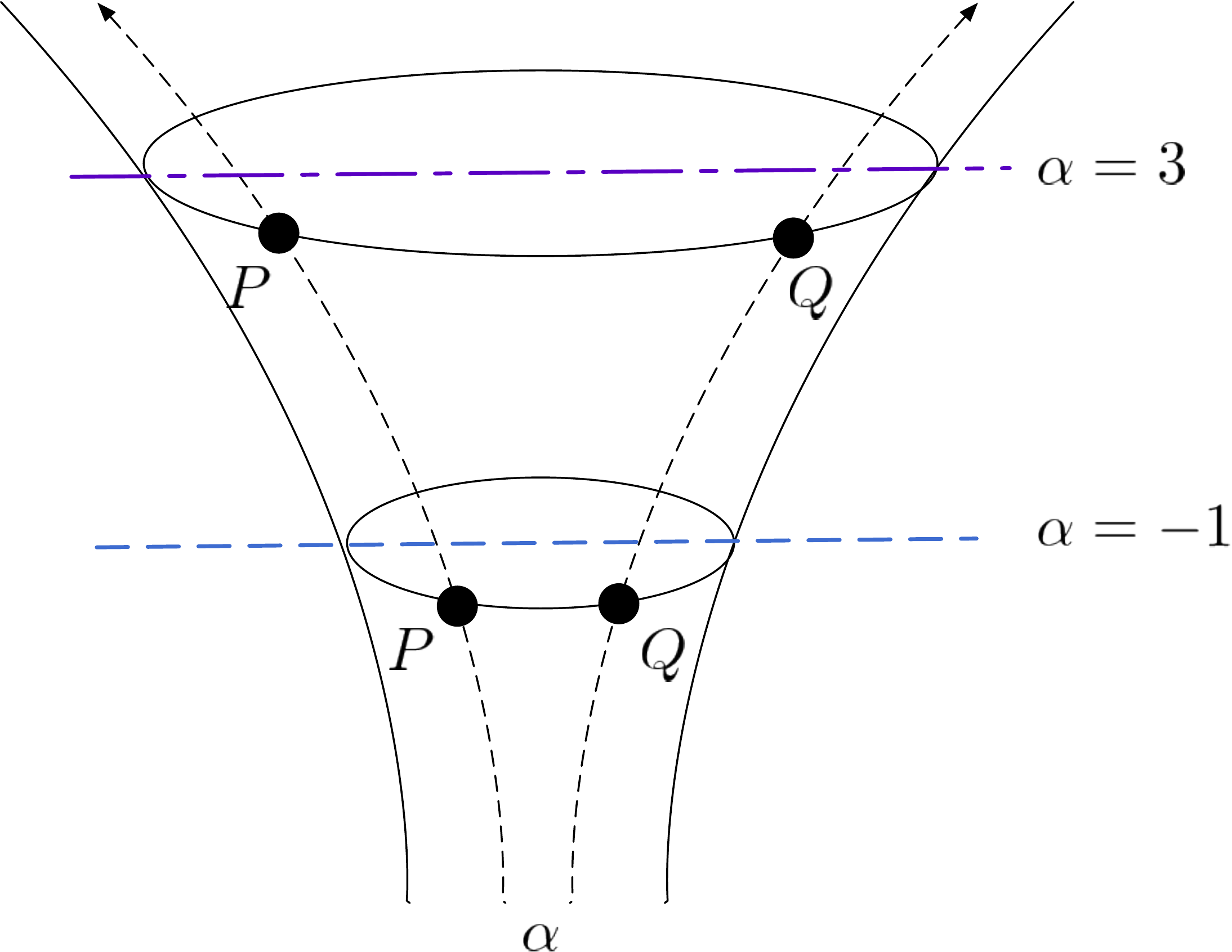}
    \caption{Coordinate system of $\mathcal{F}_q$ or $\mathcal{F}_+$.
    Such a coordinate system is not Euclidean.}
    \label{fig:coordinate_system_banach}
\end{figure}

\begin{Corollary}
Let

\begin{equation}
    \mathcal{F}_+ = \Big\{f_q^{(\alpha,\lambda)}\mid\alpha\in\mathbb{R}, \lambda\in(0,1], q\in\mathcal{P} \Big\}.
\end{equation}

Then the space $\mathcal{V}_+ \coloneqq (\mathcal{F}_+, \|\cdot\|_p)$ is a Banach space.
\end{Corollary}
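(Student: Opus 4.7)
The plan is to mirror the preceding Theorem's proof almost verbatim, now keeping track of the additional freedom that $q$ is allowed to vary over $\mathcal{P}$. First I would confirm that $\mathcal{F}_+$, equipped with the pointwise addition and scalar multiplication already introduced for $\mathcal{F}_q$, carries the structure of a (semi-)normed vector space under $\|\cdot\|_p$, and then pass to the quotient by $\mathcal{N} = \ker(\|\cdot\|_p)$ exactly as in the definition of $\mathcal{V}$; this is a formality since $\mathcal{F}_q \subset \mathcal{F}_+$ for every $q \in \mathcal{P}$.

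Second, I would establish that every $f_q^{(\alpha,\lambda)} \in \mathcal{F}_+$ lies in $L^p$. The key input is Proposition~\ref{prpo:gsd_upper_bound}, which gives the bound
\begin{equation*}
0 \le f_q^{(\alpha,\lambda)}(u) = D_{GS}^{(\alpha,\lambda)}[u\|q] \le \int_{\mathcal{X}} u\ln\frac{u}{\min\{u,q\}}\,d\mu,
\end{equation*}
uniformly in $\alpha$ and $\lambda$; combined with continuity of $f_q^{(\alpha,\lambda)}$ (inherited from continuity of the KL-divergence and the Kolmogorov mean), this yields the required $p$-power integrability.

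Third, I would verify completeness by repeating the Cauchy-sequence argument used for $\mathcal{V}$. Given a Cauchy sequence $\{f_n\} \subset \mathcal{V}_+$, I would extract a subsequence $\{f_{n(k)}\}$ with $\|f_{n(k+1)} - f_{n(k)}\|_p < 2^{-k}$, form the monotone dominating sequence $g_n = |f_{n(1)}| + \sum_{j=1}^{n-1}|f_{n(j+1)} - f_{n(j)}|$, use subadditivity of the seminorm to bound $\|g_n\|_p \le \|f_{n(1)}\|_p + 1$, invoke the monotone convergence theorem to place $\lim_n g_n$ in $\mathcal{V}_+$, and then apply dominated convergence to obtain a pointwise a.e.\ limit $f \in \mathcal{V}_+$ with $\|f - f_{n(n)}\|_p \to 0$. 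The usual Cauchy trick then upgrades subsequential to full-sequence convergence.

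The main obstacle is closure of $\mathcal{F}_+$ under the stipulated vector-space operations: a sum $f_q^{(\alpha,\lambda)} + g_{q'}^{(\alpha',\lambda')}$ with $q \ne q'$ is in general no longer expressible as a single $\alpha$-geodesical skew divergence. The natural resolution, implicit in the framing, is to identify $\mathcal{V}_+$ with the linear span of $\mathcal{F}_+$ inside $L^p(\mathcal{X},\mu)$; the uniform upper bound plus continuity places that span inside $L^p$, and completeness then follows either from the Cauchy argument above or directly from completeness of $L^p$. This is the only step where the move from fixed $q$ to varying $q$ requires more than a copy of the Theorem's proof.
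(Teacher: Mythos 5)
There is a genuine gap: your argument never uses the hypothesis $\lambda\in(0,1]$, and that hypothesis is the entire content of the corollary beyond the preceding theorem. The point of restricting $\lambda$ away from $0$ is that for $\lambda>0$ one has $D_{GS}^{(\alpha,\lambda)}[u\|q]=0$ if and only if $u=q$, whereas for $\lambda=0$ the functional $f_q^{(\alpha,0)}=D_{GS}^{(\alpha,0)}[\,\cdot\,\|q]$ is \emph{identically} zero and hence is a nonzero element of $\mathcal{F}_q$ lying in $\ker(\|\cdot\|_p)$. That is exactly why the theorem must pass to the quotient $\mathcal{V}=(\mathcal{F}_q,\|\cdot\|_p)/\mathcal{N}$, and why the corollary can assert that $(\mathcal{F}_+,\|\cdot\|_p)$ itself --- with no quotient --- is a Banach space: once $\lambda\in(0,1]$, no element of $\mathcal{F}_+$ has vanishing seminorm, so $\|\cdot\|_p$ is positive definite and is a genuine norm on $\mathcal{F}_+$ directly. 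This is precisely the paper's (one-line) proof: the restriction forces the identity of indiscernibles, the kernel becomes trivial, and completeness is then inherited from the theorem. By instead proposing to ``pass to the quotient by $\mathcal{N}$ exactly as in the definition of $\mathcal{V}$,'' you prove a statement about a quotient space rather than the stated one, and you miss the only step where the new hypothesis does any work.

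The remainder of your write-up (the $L^p$ bound via Proposition~\ref{prpo:gsd_upper_bound} and the Cauchy-sequence argument) is a faithful replay of the theorem's proof and is fine as far as it goes, but it is not where the corollary's content lies. Your closing observation --- that $\mathcal{F}_+$ is not closed under the stipulated addition when $q\neq q'$, so one should really work with the linear span inside $L^p$ --- is a legitimate structural criticism, and in fact applies equally to the paper's own construction of $\mathcal{F}_q$; but it is orthogonal to the issue the corollary is addressing, and raising it does not substitute for verifying that the seminorm is a norm.
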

\begin{proof}
If we restrict $\lambda\in(0,1]$, $D_{GS}^{(\alpha, \lambda)}[u\|q]=0$ if and only if $u=q$.
Then, $\mathcal{V}_+$ has the unique identity element, and then $\mathcal{V}_+$ is a complete norm space.
\end{proof}

Consider the socond argument $Q$ of $D^{(\alpha,\lambda)}_{GS}(P||Q)$ is fixed, which is referred to as 
the reference distribution.
Figure~\ref{fig:gsd_gaussian} shows values of the $\alpha$-geodesical skew divergence for a fixed reference $Q$, where both $P$ and $Q$ are restricted to be Gaussian.
In this figure, the reference distribution is $\mathcal{N}(0, 0.5)$ and the parameters of input distributions are varied in $\mu\in[0, 4.5]$ and  $\sigma^2\in[0.5, 2.3]$.
From this figure, one can see that larger value of $\alpha$ emphasizes the discrepancy between distributions $P$ and $Q$.
Figure~\ref{fig:coordinate_system_banach} illustrates a coordinate system associated with the $\alpha$-geodesical skew divergence for different $\alpha$. As seen from the figure, for the same pair of distributions $P$ and $Q$, the value of divergence with $\alpha=3$ is larger than that with $\alpha=-1$.

\section{Conclusions and Discussion}
In this paper, a new family of divergence is proposed to address the computational difficulty of KL-divergence.
The proposed $\alpha$-geodesical skew divergence is a natural derivation from the concept of $\alpha$-geodesics in information geometry and generalizes many existing divergences.
Furthermore, $\alpha$-geodesical skew divergence leads to several applications.
For example, the new divergence can be applied to the annealed importance sampling by the same analogy as in previous studies using q-paths~\cite{Brekelmans2020AnnealedIS}.
It could also be applied to linguistics, a field in which skew divergence is originally used~\cite{lee1999measures}.

\section*{Acknowledgement}
The authors express special thanks to the editor and reviewers whose comments led to valuable improvements of the manuscript.
Part of this work is supported by JPSJ (KAKENHI) grant number JP17H01793, JST CREST Grant No. JPMJCR2015 and NEDO JPNP18002.

\bibliographystyle{unsrt}  
\bibliography{preprint}  

\begin{thebibliography}{10}

\bibitem{deza2009encyclopedia}
Michel~Marie Deza and Elena Deza.
\newblock Encyclopedia of distances.
\newblock In {\em Encyclopedia of distances}, pages 1--583. Springer, 2009.

\bibitem{basseville2013divergence}
Mich{\'e}le Basseville.
\newblock Divergence measures for statistical data processing—an annotated
  bibliography.
\newblock {\em Signal Processing}, 93(4):621--633, 2013.

\bibitem{kullback1951information}
Solomon Kullback and Richard~A Leibler.
\newblock On information and sufficiency.
\newblock {\em The annals of mathematical statistics}, 22(1):79--86, 1951.

\bibitem{sakamoto1986akaike}
Yosiyuki Sakamoto, Makio Ishiguro, and Genshiro Kitagawa.
\newblock Akaike information criterion statistics.
\newblock {\em Dordrecht, The Netherlands: D. Reidel}, 81(10.5555):26853, 1986.

\bibitem{goldberger2003efficient}
Jacob Goldberger, Shiri Gordon, Hayit Greenspan, et~al.
\newblock An efficient image similarity measure based on approximations of
  kl-divergence between two gaussian mixtures.
\newblock In {\em ICCV}, volume~3, pages 487--493, 2003.

\bibitem{yu2013kl}
Dong Yu, Kaisheng Yao, Hang Su, Gang Li, and Frank Seide.
\newblock Kl-divergence regularized deep neural network adaptation for improved
  large vocabulary speech recognition.
\newblock In {\em 2013 IEEE International Conference on Acoustics, Speech and
  Signal Processing}, pages 7893--7897. IEEE, 2013.

\bibitem{solanki2006provably}
Kaushal Solanki, Kenneth Sullivan, Upamanyu Madhow, BS~Manjunath, and Shivkumar
  Chandrasekaran.
\newblock Provably secure steganography: Achieving zero kl divergence using
  statistical restoration.
\newblock In {\em 2006 International Conference on Image Processing}, pages
  125--128. IEEE, 2006.

\bibitem{lin1991divergence}
Jianhua Lin.
\newblock Divergence measures based on the shannon entropy.
\newblock {\em IEEE Transactions on Information theory}, 37(1):145--151, 1991.

\bibitem{menendez1997jensen}
ML~Men{\'e}ndez, JA~Pardo, L~Pardo, and MC~Pardo.
\newblock The jensen-shannon divergence.
\newblock {\em Journal of the Franklin Institute}, 334(2):307--318, 1997.

\bibitem{nielsen2019jensen}
Frank Nielsen.
\newblock On the jensen--shannon symmetrization of distances relying on
  abstract means.
\newblock {\em Entropy}, 21(5):485, 2019.

\bibitem{jeffreys1946invariant}
Harold Jeffreys.
\newblock An invariant form for the prior probability in estimation problems.
\newblock {\em Proceedings of the Royal Society of London. Series A.
  Mathematical and Physical Sciences}, 186(1007):453--461, 1946.

\bibitem{chatzisavvas2005information}
K~Ch Chatzisavvas, Ch~C Moustakidis, and CP~Panos.
\newblock Information entropy, information distances, and complexity in atoms.
\newblock {\em The Journal of chemical physics}, 123(17):174111, 2005.

\bibitem{bigi2003using}
Brigitte Bigi.
\newblock Using kullback-leibler distance for text categorization.
\newblock In {\em European conference on information retrieval}, pages
  305--319. Springer, 2003.

\bibitem{wang2006groupwise}
Fei Wang, Baba~C Vemuri, and Anand Rangarajan.
\newblock Groupwise point pattern registration using a novel cdf-based
  jensen-shannon divergence.
\newblock In {\em 2006 IEEE Computer Society Conference on Computer Vision and
  Pattern Recognition (CVPR'06)}, volume~1, pages 1283--1288. IEEE, 2006.

\bibitem{nishii2006image}
Ryuei Nishii and Shinto Eguchi.
\newblock Image classification based on markov random field models with
  jeffreys divergence.
\newblock {\em Journal of multivariate analysis}, 97(9):1997--2008, 2006.

\bibitem{bayarri2008generalization}
MJ~Bayarri and G~Garc{\'\i}a-Donato.
\newblock Generalization of jeffreys divergence-based priors for bayesian
  hypothesis testing.
\newblock {\em Journal of the Royal Statistical Society: Series B (Statistical
  Methodology)}, 70(5):981--1003, 2008.

\bibitem{nielsen2013jeffreys}
Frank Nielsen.
\newblock Jeffreys centroids: A closed-form expression for positive histograms
  and a guaranteed tight approximation for frequency histograms.
\newblock {\em IEEE Signal Processing Letters}, 20(7):657--660, 2013.

\bibitem{nielsen2020generalization}
Frank Nielsen.
\newblock On a generalization of the jensen--shannon divergence and the
  jensen--shannon centroid.
\newblock {\em Entropy}, 22(2):221, 2020.

\bibitem{lee1999measures}
Lillian Lee.
\newblock Measures of distributional similarity.
\newblock In {\em Proceedings of the 37th annual meeting of the Association for
  Computational Linguistics on Computational Linguistics}, pages 25--32, 1999.

\bibitem{lee2001effectiveness}
Lillian Lee.
\newblock On the effectiveness of the skew divergence for statistical language
  analysis.
\newblock In {\em AISTATS}. Citeseer, 2001.

\bibitem{xiao2019dual}
Fengshun Xiao, Yingting Wu, Hai Zhao, Rui Wang, and Shu Jiang.
\newblock Dual skew divergence loss for neural machine translation.
\newblock {\em arXiv preprint arXiv:1908.08399}, 2019.

\bibitem{carvalho2014skew}
Bruno~M Carvalho, Edgar Gardu{\~n}o, and Ira{\c{c}}{\'u}~O Santos.
\newblock Skew divergence-based fuzzy segmentation of rock samples.
\newblock In {\em Journal of Physics: Conference Series}, volume 490, page
  012010. IOP Publishing, 2014.

\bibitem{revathi2014cotton}
P~Revathi and M~Hemalatha.
\newblock Cotton leaf spot diseases detection utilizing feature selection with
  skew divergence method.
\newblock {\em International Journal of scientific engineering and technology},
  3(1):22--30, 2014.

\bibitem{ahmed2011network}
Nesreen Ahmed, Jennifer Neville, and Ramana~Rao Kompella.
\newblock Network sampling via edge-based node selection with graph induction.
\newblock 2011.

\bibitem{hughes2007lexical}
Thad Hughes and Daniel Ramage.
\newblock Lexical semantic relatedness with random graph walks.
\newblock In {\em Proceedings of the 2007 joint conference on empirical methods
  in natural language processing and computational natural language learning
  (EMNLP-CoNLL)}, pages 581--589, 2007.

\bibitem{audenaert2014quantum}
Koenraad~MR Audenaert.
\newblock Quantum skew divergence.
\newblock {\em Journal of Mathematical Physics}, 55(11):112202, 2014.

\bibitem{hardy1952inequalities}
Godfrey~Harold Hardy, John~Edensor Littlewood, and George P{\'o}lya.
\newblock {\em Inequalities. By GH Hardy, JE Littlewood, G. P{\'o}lya..}
\newblock University Press, 1952.

\bibitem{Amari2016-pi}
Shun-Ichi Amari.
\newblock {\em Information Geometry and Its Applications}.
\newblock Springer, 2 2016.

\bibitem{kolmogorov1930notion}
Andrey~Nikolaevich Kolmogorov and Guido Castelnuovo.
\newblock {\em Sur la notion de la moyenne}.
\newblock G. Bardi, tip. della R. Accad. dei Lincei, 1930.

\bibitem{nagumo1930klasse}
Mitio Nagumo.
\newblock {\"U}ber eine klasse der mittelwerte.
\newblock In {\em Japanese journal of mathematics: transactions and abstracts},
  volume~7, pages 71--79. The Mathematical Society of Japan, 1930.

\bibitem{nielsen2014generalized}
Frank Nielsen.
\newblock Generalized bhattacharyya and chernoff upper bounds on bayes error
  using quasi-arithmetic means.
\newblock {\em Pattern Recognition Letters}, 42:25--34, 2014.

\bibitem{amari2012differential}
Shun-ichi Amari.
\newblock {\em Differential-geometrical methods in statistics}, volume~28.
\newblock Springer Science \& Business Media, 2012.

\bibitem{Amari1985-mi}
Shunichi Amari.
\newblock Differential-geometrical methods in statistics.
\newblock {\em Lecture Notes on Statistics}, 28:1, 1985.

\bibitem{Amari2009-fz}
S~Amari.
\newblock $\alpha$ -divergence is unique, belonging to both {$f$-Divergence}
  and bregman divergence classes.
\newblock {\em IEEE Trans. Inf. Theory}, 55(11):4925--4931, November 2009.

\bibitem{Ay2017}
Nihat Ay, J\"{u}rgen Jost, H{\^{o}}ng~V{\^{a}}n L{\^{e}}, and Lorenz
  Schwachh\"{o}fer.
\newblock {\em Information Geometry}.
\newblock Springer International Publishing, 2017.

\bibitem{Morozova1991}
E.~A. Morozova and N.~N. Chentsov.
\newblock Markov invariant geometry on manifolds of states.
\newblock {\em Journal of Soviet Mathematics}, 56(5):2648--2669, October 1991.

\bibitem{Eguchi2015}
Shinto Eguchi and Osamu Komori.
\newblock Path connectedness on a space of probability density functions.
\newblock In {\em Lecture Notes in Computer Science}, pages 615--624. Springer
  International Publishing, 2015.

\bibitem{e23040464}
Frank Nielsen.
\newblock On a variational definition for the jensen-shannon symmetrization of
  distances based on the information radius.
\newblock {\em Entropy}, 23(4), 2021.

\bibitem{nielsen2010family}
Frank Nielsen.
\newblock A family of statistical symmetric divergences based on jensen's
  inequality.
\newblock {\em arXiv preprint arXiv:1009.4004}, 2010.

\bibitem{cover1999elements}
Thomas~M Cover.
\newblock {\em Elements of information theory}.
\newblock John Wiley \& Sons, 1999.

\bibitem{Brekelmans2020AnnealedIS}
Rob Brekelmans, Vaden Masrani, Thang~D. Bui, Frank~D. Wood, A.~Galstyan, G.~V.
  Steeg, and F.~Nielsen.
\newblock Annealed importance sampling with q-paths.
\newblock {\em ArXiv}, abs/2012.07823, 2020.

\end{thebibliography}

\end{document}